\documentclass[11pt]{article}

\usepackage{fullpage}
\usepackage{latexsym}
\usepackage{amssymb,amsfonts,amsmath,amsthm}
\usepackage{times}

\def\01{\{0,1\}}

\newcommand{\eps}{\varepsilon}

\newcommand{\DISJ}{\mbox{\rm DISJ}}

\newcommand{\Exp}{{\mathbb{E}}}
\newcommand{\ket}[1]{|#1\rangle}
\newcommand{\bra}[1]{\langle#1|}
\newcommand{\ketbra}[2]{|#1\rangle\langle#2|}
\newcommand{\norm}[1]{{\left\|{#1}\right\|}}
\newcommand{\normb}[1]{{\big\|{#1}\big\|}}
\newcommand{\inp}[2]{\langle{#1}|{#2}\rangle} 
\newcommand{\Tr}{\mbox{\rm Tr}}

\renewcommand{\Pr}{\mbox{\rm Pr}}
\newcommand{\R}{\mathbb{R}}
\newcommand{\C}{\mathbb{C}}
\newcommand{\M}{{\cal M}}
\newcommand{\half}{{\frac{1}{2}}}
\newcommand{\set}[1]{{\left\{#1\right\}}}
\newcommand{\ksubsets}{{[n] \choose k}}

\newcommand{\trnorm}[1]{\norm{#1}_{\rm tr}}
\newcommand{\trnormb}[1]{{\big\| #1 \big\|}_{\rm tr}}

\newtheorem{definition}{Definition}
\newtheorem{theorem}{Theorem}
\newtheorem{lemma}[theorem]{Lemma}

\newtheorem{fact}[theorem]{Fact}

\newcommand{\mnote}[1]{}

\renewcommand{\qed}{\hfill{\rule{2mm}{2mm}}}
\renewenvironment{proof}[1][]{\begin{trivlist}
\item[\hspace{\labelsep}{\bf\noindent Proof#1:\/}] }{\qed\end{trivlist}}

\begin{document}

\title{A Hypercontractive Inequality for Matrix-Valued Functions\\ with Applications to Quantum Computing and LDCs}
\author{
Avraham Ben-Aroya\thanks{School of Computer Science, Tel-Aviv University, Tel-Aviv 69978, Israel.
   Supported by the Adams Fellowship Program of the Israel Academy of Sciences and Humanities, by the Israel Science Foundation,
   and by the European Commission under the
Integrated Project QAP funded by the IST directorate as Contract Number 015848.
}
\and
Oded Regev\thanks{School of Computer Science, Tel-Aviv University, Tel-Aviv 69978, Israel. Supported
   by the Binational Science Foundation, by the Israel Science Foundation, and
   by the European Commission under the Integrated Project QAP funded by the IST directorate as Contract Number 015848.
}
\and
Ronald de Wolf\thanks{Centrum voor Wiskunde en Informatica (CWI), Amsterdam, The Netherlands. Supported by a Veni grant from the Netherlands Organization for Scientific Research
(NWO) and also partially supported by the European Commission under the Integrated Project QAP funded by the IST directorate as Contract Number 015848.}}
\date{}
\maketitle
\thispagestyle{empty}

\begin{abstract}
The Bonami-Beckner hypercontractive inequality is a powerful tool in Fourier analysis
of real-valued functions on the Boolean cube.
In this paper we present a version of this inequality for \emph{matrix-valued} functions on the Boolean cube.
Its proof is based on a powerful inequality by Ball, Carlen, and Lieb.
We also present a number of applications.
First, we analyze maps that encode $n$ classical
bits into $m$ qubits, in such a way that each set of $k$ bits can be recovered with some probability by
an appropriate measurement on the quantum encoding;
we show that if $m<0.7 n$, then the success probability is exponentially small in~$k$.
This result may be viewed as a direct product version of Nayak's quantum random access code bound.
It in turn implies strong direct product theorems for the one-way quantum communication complexity
of Disjointness and other problems.
Second, we prove that error-correcting codes that are locally decodable with 2 queries require
length exponential in the length of the encoded string.
This gives what is arguably the first ``non-quantum'' proof of a result originally derived by Kerenidis and de Wolf using quantum information theory,
and answers a question by Trevisan.
\end{abstract}

\newpage
\setcounter{page}{1}

\section{Introduction}

\subsection{A hypercontractive inequality for matrix-valued functions}

Fourier analysis of real-valued functions on the Boolean cube has been
widely used in the theory of computing. Applications include analyzing the influence of variables
on Boolean functions~\cite{kkl:influence}, 
probabilistically-checkable proofs and associated hardness of approximation~\cite{hastad:optimalj},
analysis of threshold phenomena~\cite{kalai&safra:threshold},
noise stability~\cite{moo:noisestability,odonnell:thesis}, 
voting schemes~\cite{odonnell:survey},
learning under the uniform distribution
\cite{lmn:learnability,mansour:dnfuniform,jackson:dnf,mos:learningjuntas},
communication complexity~\cite{raz:ccfourier,klauck:qcclower,gkkrw:1way}, etc.

One of the main technical tools in this area is a hypercontractive inequality that is sometimes
called the \emph{Bonami-Beckner inequality}~\cite{Bon70,Bec75}, though its history would also justify other names
(see Lecture~16 of~\cite{odonnell:notes} for some background and history).
For a fixed $\rho\in[0,1]$, consider the linear operator $T_\rho$ on the space of all
functions $f:\01^n\rightarrow\R$ defined by
$$
(T_\rho(f))(x)= \Exp_y [f(y)],
$$
where the expectation is taken over $y$ obtained from $x$ by negating each bit independently with probability $(1-\rho)/2$.
In other words, the value of $T_\rho(f)$ at a point $x$ is obtained by averaging
the values of $f$ over a certain neighborhood of $x$.
One important property of $T_\rho$ for $\rho<1$ is that it has a ``smoothing" effect:
any ``high peaks" present in $f$ are smoothed out in $T_\rho (f)$.
The hypercontractive inequality formalizes this intuition.
To state it precisely, define the $p$-norm of a function $f$
by $\norm{f}_p=(\frac{1}{2^n}\sum_x |f(x)|^p)^{1/p}$. It is not difficult to prove that the norm
is nondecreasing with $p$. Also, the higher $p$ is, the more sensitive the norm becomes to peaks in the
function~$f$. The hypercontractive inequality says that for certain $q > p$, the $q$-norm of $T_\rho (f)$ is
upper bounded by the $p$-norm of $f$. This exactly captures the intuition that $T_\rho( f)$ is a smoothed
version of $f$: even though we are considering a higher norm, the norm does not increase.
More precisely, the hypercontractive inequality says that as long as $1\leq p\leq q$ and
$\rho\leq\sqrt{(p-1)/(q-1)}$, we have
\begin{equation}\label{eq:hypercontractivegeneral}
\norm{T_\rho (f)}_q\leq\norm{f}_p.
\end{equation}

The most interesting case for us is when $q=2$, since in this case
one can view the inequality as a statement about the Fourier coefficients of $f$,
as we describe next. Let us first recall some basic definitions from Fourier analysis.
For every $S\subseteq [n]$ (which by some abuse of notation we will also view as an $n$-bit string)
and $x \in \{0,1\}^n$, define $\chi_S(x)=(-1)^{x\cdot S}$ to be the parity of the bits of $x$ indexed by $S$.
The \emph{Fourier transform} of a function $f:\01^n\to\R$ is the function $\widehat{f}:\01^n\to\R$ defined by
$$
\widehat{f}(S)=\frac{1}{2^n}\sum_{x\in\01^n}f(x)\chi_S(x).
$$
The values $\widehat{f}(S)$ are called the \emph{Fourier coefficients} of $f$.
The coefficient $\widehat{f}(S)$ may be viewed as measuring the correlation between $f$
and the parity function $\chi_S$. Since the functions $\chi_S$ form an orthonormal basis
of the space of all functions from $\01^n$ to $\R$, we can express $f$ in terms
of its Fourier coefficients as
\begin{align}\label{eq:invfourier}
f=\sum_{S\subseteq [n]}\widehat{f}(S)\chi_S.
\end{align}
Using the same reasoning we obtain Parseval's identity,
$$
\norm{f}_2=\left( \sum_{S\subseteq [n]}\widehat{f}(S)^2 \right)^{1/2}.
$$
The operator $T_\rho$ has a particularly
elegant description in terms of the Fourier coefficients. Namely, it simply
multiplies each Fourier coefficient $\widehat{f}(S)$ by a factor of $\rho^{|S|}$:
$$
T_\rho(f)=\sum_{S \subseteq [n]}\rho^{|S|} \widehat{f}(S)\chi_S.
$$
The higher $|S|$ is, the stronger the Fourier coefficient $\widehat{f}(S)$ is ``attenuated'' by $T_\rho$.
Using Parseval's identity, we can now write the hypercontractive inequality~\eqref{eq:hypercontractivegeneral}
for the case $q=2$ as follows. For every $p \in [1,2]$,
\begin{equation}\label{eqbonamibeckner}
\Bigg( \sum_{S\subseteq[n]} (p-1)^{|S|}\widehat{f}(S)^2 \Bigg) ^ {1/2} \leq
  \Bigg(\frac{1}{2^n}\sum_{x\in\01^n} |f(x)|^p\Bigg)^{1/p}.
\end{equation}
This gives an upper bound on a weighted sum of the squared Fourier coefficients of $f$,
where each coefficient is attenuated by a factor $(p-1)^{|S|}$.
We are interested in generalizing this hypercontractive inequality to \emph{matrix-valued} functions.
Let $\M$ be the space of $d\times d$ complex matrices and suppose we have a function $f:\01^n\to\M$.
For example, a natural scenario where this arises is in quantum information theory,
if we assign to every $x\in\01^n$ some $m$-qubit \emph{density matrix} $f(x)$ (so $d=2^m$).
We define the Fourier transform $\widehat{f}$ of a matrix-valued function $f$ exactly as before:
$$
\widehat{f}(S)=\frac{1}{2^n}\sum_{x\in\01^n}f(x)\chi_S(x).
$$
The Fourier coefficients $\widehat{f}(S)$ are now also $d\times d$ matrices.
An equivalent definition is by applying the standard Fourier transform
to each $i,j$-entry separately: $\widehat{f}(S)_{ij}=\widehat{f(\cdot)_{ij}}(S)$.
This extension of the Fourier transform to matrix-valued functions is quite
natural, and has also been used in, e.g.,~\cite{nayak:quantumwalk,fehr&schaffner:extraction}.

Our main tool, which we prove in Section~\ref{secproofmatrixbeckner}, is an extension
of the hypercontractive inequality to matrix-valued functions.
For $M\in\M$ with singular values $\sigma_1,\ldots,\sigma_d$, we define its (normalized Schatten) $p$-norm
as $\norm{M}_p=(\frac{1}{d}\sum_{i=1}^d\sigma_i^p)^{1/p}$.

\begin{theorem}\label{thm:main-thm}
For every $f:\01^n \to \M$ and $1 \le p \le 2$,
$$
\Bigg( \sum_{S \subseteq [n]} (p-1)^{|S|} \normb{\widehat{f}(S)}_p^2 \Bigg)^{1/2} \le
   \Bigg( \frac{1}{2^{n}} \sum_{x \in \01^n} \norm{f(x)}_p^p \Bigg)^{1/p}.
$$
\end{theorem}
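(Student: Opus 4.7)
The plan is to mirror the classical inductive proof of the scalar Bonami-Beckner inequality, in which the $n$-bit statement is bootstrapped from a sharp ``two-point'' inequality on a single bit. The matrix analog of that two-point inequality is the Ball-Carlen-Lieb inequality referenced in the abstract: for any $A,B\in\M$ and $1\le p\le 2$,
$$
\norm{A}_p^2+(p-1)\norm{B}_p^2 \;\le\; \Bigg(\frac{\norm{A+B}_p^p+\norm{A-B}_p^p}{2}\Bigg)^{2/p}.
$$
Writing a function $f:\01\to\M$ as $f(0)=A+B$, $f(1)=A-B$ so that $\widehat{f}(\emptyset)=A$ and $\widehat{f}(\{1\})=B$ makes this exactly the $n=1$ case of Theorem~\ref{thm:main-thm}, giving the base of the induction.

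For the inductive step I would split $f:\01^n\to\M$ along its last coordinate, defining $g(x)=\half(f(x,0)+f(x,1))$ and $h(x)=\half(f(x,0)-f(x,1))$ for $x\in\01^{n-1}$. A direct computation shows $\widehat{f}(S)=\widehat{g}(S)$ when $n\notin S$ and $\widehat{f}(S)=\widehat{h}(S\setminus\{n\})$ when $n\in S$. Separating the left-hand side of the theorem along these two cases yields
$$
\sum_{S\subseteq[n]}(p-1)^{|S|}\normb{\widehat{f}(S)}_p^2 = \sum_{T\subseteq[n-1]}(p-1)^{|T|}\normb{\widehat{g}(T)}_p^2 + (p-1)\sum_{T\subseteq[n-1]}(p-1)^{|T|}\normb{\widehat{h}(T)}_p^2,
$$
and applying the inductive hypothesis to each inner sum reduces the whole problem to proving the single inequality $\norm{g}_p^2+(p-1)\norm{h}_p^2\le\norm{f}_p^2$, where the $p$-norms on the left and right are for functions with domain $\01^{n-1}$ and $\01^n$ respectively.

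To close this last gap I would apply BCL pointwise to $g(x),h(x)$ at each $x\in\01^{n-1}$, obtaining $\norm{g(x)}_p^2+(p-1)\norm{h(x)}_p^2\le\bigl(\half(\norm{f(x,0)}_p^p+\norm{f(x,1)}_p^p)\bigr)^{2/p}$, and then aggregate these pointwise bounds using the reverse Minkowski inequality for $\ell_r$ with $r=p/2\le 1$: for nonnegative sequences, $(\sum_x(u_x+v_x)^r)^{1/r}\ge(\sum_x u_x^r)^{1/r}+(\sum_x v_x^r)^{1/r}$. Combined with the obvious monotonicity of $\ell_r$ under pointwise comparison, the normalization constants line up (the $\half$ built into BCL absorbs the change from $1/2^{n-1}$ to $1/2^n$) so that the right-hand side collapses to exactly $\norm{f}_p^2$, completing the induction.

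The main obstacle is the matrix BCL inequality itself: its scalar counterpart is an elementary calculus exercise, but the passage to noncommuting matrices and Schatten $p$-norms is a genuine piece of operator convexity, and I would simply cite it rather than attempt to reprove it. Granted BCL, the remaining ingredients -- Fourier splitting along a single coordinate and a reverse-Minkowski aggregation -- are routine.
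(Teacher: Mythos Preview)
Your proposal is correct and uses the same three ingredients as the paper's proof---induction on $n$, the Ball--Carlen--Lieb inequality, and a Minkowski-type inequality---but organizes them in the ``dual'' order. The paper applies the induction hypothesis to the two \emph{restrictions} $g_0=f|_{x_{n+1}=0}$ and $g_1=f|_{x_{n+1}=1}$, takes the $L_p$ average of the resulting inequalities (so the spatial side collapses immediately), and then does the work on the \emph{Fourier} side: first the forward Minkowski inequality $\|\cdot\|_{\ell_p(\ell_2)}\ge\|\cdot\|_{\ell_2(\ell_p)}$ on the array $c_{S,i}=(p-1)^{|S|/2}\|\widehat{g_i}(S)\|_p$, and then BCL applied to each pair $\widehat{g_0}(S),\widehat{g_1}(S)$. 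You instead apply the induction hypothesis to the \emph{half-sum} $g$ and \emph{half-difference} $h$ (so the Fourier side collapses immediately), and do the work on the \emph{spatial} side: BCL pointwise at each $x\in\01^{n-1}$, followed by the reverse Minkowski inequality in $\ell_{p/2}$. Both routes are equally short; yours has the mild advantage that the final step ``$\|g\|_p^2+(p-1)\|h\|_p^2\le\|f\|_p^2$'' is exactly the statement that the $\M$-valued function space $L_p(\01^{n-1};\M)$ inherits the same two-point inequality as $\M$ itself, which makes the tensorization structure a bit more transparent.
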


This is the analogue of Eq.~\eqref{eqbonamibeckner} for matrix-valued functions, with $p$-norms replacing absolute values.
The case $n=1$ can be seen as a geometrical statement that extends
the familiar parallelogram law in Euclidean geometry and is closely related
to the notion of uniform convexity. This case
was first proven for certain values of $p$ by Tomczak-Jaegermann \cite{Tomczak74}
and then in full generality by Ball, Carlen, and Lieb~\cite{bcl:tracenorms}.
Among its applications are the work of Carlen and Lieb on fermion fields \cite{CarlenLieb},
and the more recent work of Lee and Naor on metric embeddings \cite{LeeNaor04}.

To the best of our knowledge, the general case $n \ge 1$ has not
appeared before.\footnote{A different generalization
of the Bonami-Beckner inequality was given by Borell~\cite{Borell79}. His
generalization, however, is an easy corollary of the Bonami-Beckner inequality
and is therefore relatively weak (although it does apply to any Banach space, and
not just to the space of matrices with the Schatten $p$-norm).}
Its proof is not difficult, and follows by induction on $n$, similar to the
proof of the usual hypercontractive inequality.\footnote{We remark
that Carlen and Lieb's proof in~\cite{CarlenLieb} also uses induction
and has some superficial resemblance to the proof given here. Their induction,
however, is on the \emph{dimension} of the matrices (or more precisely,
the number of fermions), and moreover leads to an entirely different inequality.}
Although one might justly regard Theorem~\ref{thm:main-thm} as a ``standard'' corollary of
the result by Ball, Carlen, and Lieb, such ``tensorized inequalities''
tend to be extremely useful (see, e.g., \cite{Bobkov97,Gross75})
and we believe that the matrix-valued hypercontractive inequality
will have more applications in the future.

\subsection{Application: $k$-out-of-$n$ random access codes}\label{ssecintrokrac}

Our main application of Theorem~\ref{thm:main-thm} is for the following information-theoretic problem.
Suppose we want to encode an $n$-bit string $x$ into $m$ bits or qubits, in such a way that
for any set $S\subseteq[n]$ of $k$ indices, the $k$-bit substring $x_S$ can
be recovered with probability at least $p$ by making an appropriate measurement on the encoding.
We are allowed to use probabilistic encodings here, so the encoding need not be a function mapping
$x$ to a fixed classical string or a fixed quantum pure state.
We will call such encodings \emph{$k$-out-of-$n$ random access codes}, since they allow us to access any
set of $k$ out of $n$ bits. As far as we know, for $k>1$ neither the classical nor the quantum case
has been studied before. Here we focus on the quantum case, because our lower bounds for quantum encodings
of course also apply to classical encodings.

We are interested in the tradeoff between the length $m$ of the quantum random access code,
and the success probability $p$.
Clearly, if $m\geq n$ then we can just use the identity encoding to obtain $p=1$.
If $m<n$ then by Holevo's theorem~\cite{holevo} our encoding will be ``lossy'', and $p$ will be less than~1.
The case $k=1$ was first studied by Ambainis et al.~\cite{antv:dense},
who showed that if $p$ is bounded away from 1/2, then $m=\Omega(n/\log n)$.
Nayak~\cite{nayak:qfa} subsequently strengthened this bound to $m\geq (1-H(p))n$,
where $H(\cdot)$ is the binary entropy function.
This bound is optimal up to an additive $\log n$ term both for classical and quantum encodings.
The intuition of Nayak's proof is that, for average $i$, the encoding only contains $m/n<1$ bits
of information about the bit $x_i$, which limits our ability to predict $x_i$ given the encoding.

Now suppose that $k>1$, and $m$ is much smaller than $n$. Clearly, for predicting one specific
bit $x_i$, with $i$ uniformly chosen, Nayak's result applies, and we will have a success probability
that is bounded away from~1. But intuitively this should apply to each of the $k$ bits that we need
to predict.  Moreover, these $k$ success probabilities should not be very correlated,
so we expect an overall success probability that is exponentially small in $k$.
Nayak's proof does not generalize to the case $k\gg 1$ (or at least, we do not know how to do it).
The reason it fails is the following.
Suppose we probabilistically encode $x\in\01^n$ as follows: with probability 1/4 our encoding is $x$ itself,
and with probability 3/4 our encoding is the empty string.
Then the average length of the output (and hence the entropy or amount of information in the encoding)
is only $n/4$ bits, or 1/4 bit for an average $x_i$. Yet from this encoding one can predict
\emph{all of} $x$ with success probability $1/4$!
Hence, if we want to prove our intuition, we should make use of the fact that the encoding
is always confined to a $2^m$-dimensional space (a property which the above example lacks).
Arguments based on von Neumann entropy, such as the one of~\cite{nayak:qfa}, do not seem
capable of capturing this condition (however, a \emph{min-entropy} argument recently
enabled K\"{o}nig and Renner to prove a closely related but incomparable result, see below).
The new hypercontractive inequality offers an alternative approach---in fact the only
alternative approach to entropy-based methods that we are aware of in quantum information.
Applying the inequality to the matrix-valued function that gives the encoding implies
$p\leq 2^{-\Omega(k)}$ if $m\ll n$.  More precisely:

\begin{theorem}\label{thm:strongqrac}
For any $\eta> 2 \ln 2$ there exists a constant $C_\eta$ such
that if $n/k$ is large enough then for any $k$-out-of-$n$ quantum random access code on $m$ qubits,
the success probability satisfies
$$
p\leq C_\eta \left(\frac{1}{2}+\frac{1}{2}\sqrt{\frac{\eta m}{n}}\right)^k.
$$
\end{theorem}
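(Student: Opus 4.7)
The plan is to view the encoding as a matrix-valued function $f:\01^n\to\M$ given by $f(x)=\rho_x$ (so $d=2^m$), and feed $f$ into Theorem~\ref{thm:main-thm}. Fix a $k$-subset $S\subseteq[n]$ with decoding POVM $\{M_S^y\}_{y\in\01^k}$, and for each $T\subseteq S$ set $B_{S,T}:=\sum_y\chi_T(y)M_S^y$. Then $B_{S,T}=2\sum_{y:\chi_T(y)=1}M_S^y-I$ has spectrum in $[-1,1]$, so $\norm{B_{S,T}}_\infty\le 1$, while $B_{S,\emptyset}=I$. Since $M_S^{x_S}=2^{-k}\sum_{T\subseteq S}\chi_T(x)B_{S,T}$,
\begin{equation*}
p_S\,:=\,\Exp_x\Tr(M_S^{x_S}\rho_x)\,=\,\frac{1}{2^k}+\frac{1}{2^k}\sum_{\emptyset\neq T\subseteq S}\Tr\bigl(B_{S,T}\,\widehat{f}(T)\bigr).
\end{equation*}
Averaging over a uniform $k$-subset $S$ and using $\Pr_S[T\subseteq S]=\binom{k}{|T|}/\binom{n}{|T|}$, and noting that the worst-case success probability $p$ is bounded by $\Exp_S p_S$, we get
\begin{equation*}
p-\tfrac{1}{2^k}\,\le\,\tfrac{1}{2^k}\sum_{t=1}^{k}\frac{\binom{k}{t}}{\binom{n}{t}}\sum_{|T|=t}\bigl|\Tr(\overline B_T\,\widehat{f}(T))\bigr|,\qquad\overline B_T\,:=\,\Exp_{S\supseteq T}B_{S,T},
\end{equation*}
with $\norm{\overline B_T}_\infty\le 1$.

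The next step is matrix H\"older in the normalized Schatten norms, $|\Tr(AB)|\le d\,\norm{A}_{p'}\norm{B}_p$ with $1/p+1/p'=1$, giving $|\Tr(\overline B_T\,\widehat{f}(T))|\le d\,\norm{\widehat{f}(T)}_p$. Two successive Cauchy--Schwarz applications then engineer precisely the weighted sum that Theorem~\ref{thm:main-thm} controls: first within each level $|T|=t$, turning $\sum_{|T|=t}\norm{\widehat{f}(T)}_p$ into $\binom{n}{t}^{1/2}\bigl(\sum_{|T|=t}\norm{\widehat{f}(T)}_p^2\bigr)^{1/2}$, and then across $t$ with weights $(p-1)^{-t/2}$ on one side:
\begin{equation*}
p-\tfrac{1}{2^k}\,\le\,\frac{d}{2^k}\left(\sum_{t=1}^{k}\frac{\binom{k}{t}^2}{\binom{n}{t}(p-1)^t}\right)^{1/2}\left(\sum_{T}(p-1)^{|T|}\norm{\widehat{f}(T)}_p^2\right)^{1/2}.
\end{equation*}
By Theorem~\ref{thm:main-thm} the second factor is at most $(\Exp_x\norm{\rho_x}_p^p)^{1/p}$; since each $\rho_x$ is a density matrix its singular values lie in $[0,1]$ and sum to $1$, so $\sum_i\sigma_i^p\le 1$, $\norm{\rho_x}_p^p\le 1/d$, and this factor is at most $d^{-1/p}$. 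The combinatorial factor is bounded via $\binom{k}{t}/\binom{n}{t}\le(k/n)^t$,
\begin{equation*}
\sum_{t=1}^{k}\frac{\binom{k}{t}^2}{\binom{n}{t}(p-1)^t}\,\le\,\sum_{t=1}^{k}\binom{k}{t}\!\left(\frac{k}{n(p-1)}\right)^{\!t}\,\le\,\left(1+\frac{k}{n(p-1)}\right)^{\!k},
\end{equation*}
yielding
\begin{equation*}
p\,\le\,\frac{1}{2^k}\!\left(1+d^{(p-1)/p}\Bigl(1+\tfrac{k}{n(p-1)}\Bigr)^{k/2}\right).
\end{equation*}

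The hard part is the final scalar optimization, which is where the constant $2\ln 2$ enters. Writing $\delta:=p-1$ and picking $\delta:=k/\sqrt{\eta m n}$ (valid when $n/k$ is large enough to make $\delta\le 1$) makes $k/(n\delta)=\sqrt{\eta m/n}$, so that $(1+k/(n\delta))^{k/2}=(1+\sqrt{\eta m/n})^{k/2}$, while $d^{(p-1)/p}\le 2^{m\delta}=2^{k\sqrt{m/(\eta n)}}$. The remaining scalar inequality $2^{k\sqrt{m/(\eta n)}}\le C_\eta(1+\sqrt{\eta m/n})^{k/2}$, after taking logarithms and using $\ln(1+x)\ge x-x^2/2$, reduces to comparing the coefficient of $\sqrt{m/n}$ on both sides: $k(\ln 2)/\sqrt{\eta}$ on the left versus $k\sqrt{\eta}/2$ on the right. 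These are equal exactly when $\eta=2\ln 2$; for any $\eta>2\ln 2$ the left is strictly smaller at leading order, and the lower-order remainder is absorbed into a finite $C_\eta$. This Taylor-coefficient matching is the only non-routine step---it is where the critical threshold $\eta=2\ln 2$ (equivalently, the ``$m<0.7n$'' of the abstract) is forced---while the Fourier expansion, the H\"older--Cauchy--Schwarz chain, and the invocation of Theorem~\ref{thm:main-thm} are otherwise mechanical.
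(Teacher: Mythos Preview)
Your Fourier decomposition of the POVM and the first Cauchy--Schwarz (within each level $t$) are correct and essentially reproduce the paper's setup. The gap is in the \emph{second} Cauchy--Schwarz across levels and the final optimization. After those steps you reach
\[
p \,\le\, \frac{1}{2^k}\!\left(1 + 2^{m\delta}\Bigl(1+\tfrac{k}{n\delta}\Bigr)^{k/2}\right)
\]
for a \emph{single} $\delta=p-1$, and with $\delta=k/\sqrt{\eta m n}$ you need $2^{k\sqrt{m/(\eta n)}}\le C_\eta(1+\sqrt{\eta m/n})^{k/2}$. Taking $k$th roots and writing $u=\sqrt{m/(\eta n)}$, this requires $4^{u}\le C_\eta^{2/k}(1+\eta u)$. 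Since $C_\eta^{2/k}\to 1$, you effectively need $4^u\le 1+\eta u$ on the whole nontrivial range $u\in[0,1/\eta)$. But an exponential against a line can only hold near the tangency point $u=0$: for instance at $\eta=1.5>2\ln 2$ and $u=0.6<1/\eta$ one has $4^{0.6}\approx 2.30>1.9=1+\eta u$, and the ratio raised to the $k$th power blows up. So your claim that ``the lower-order remainder is absorbed into a finite $C_\eta$'' is false---the remainder is exponential in $k$. (Your Taylor argument via $\ln(1+x)\ge x-x^2/2$ only controls the linear term; the quadratic term it introduces already grows like $k$.)

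The loss is precisely the Cauchy--Schwarz over $t$, which forces one $\delta$ for all Fourier levels. The paper instead stops at your intermediate expression $p\le 2^{-k}\sum_{t}\binom{k}{t}\,\Exp_{|T|=t}\trnormb{\widehat f(T)}$, bounds each level separately using the hypercontractive inequality with the \emph{level-dependent} choice $\delta_t=t/(2m\ln 2)$ (this is Theorem~\ref{xor-bias-thm}, yielding $\Exp_{|T|=t}\trnormb{\widehat f(T)}\le C_\eta(\eta m/n)^{t/2}$), and then recombines via the binomial theorem rather than Cauchy--Schwarz:
\[
p \,\le\, \frac{1}{2^k}\sum_{t=0}^{k}\binom{k}{t}C_\eta\Bigl(\frac{\eta m}{n}\Bigr)^{t/2}
\,=\, C_\eta\Bigl(\tfrac12+\tfrac12\sqrt{\tfrac{\eta m}{n}}\Bigr)^{k}.
\]
Everything in your argument up to and including the first Cauchy--Schwarz already delivers exactly this level-by-level sum, so the fix is simply to optimize $\delta$ per level before summing.
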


In particular, the success probability is exponentially small in $k$ if $m/n<1/(2\ln 2)\approx 0.721$.
Notice that for very small $m/n$ the bound on $p$ gets close to $2^{-k}$, which is what one gets
by guessing the $k$-bit answer randomly.
We also obtain bounds if $k$ is close to $n$, but these are a bit harder to state.
We believe that the theorem can be extended to the case that $m/n> 1/(2\ln 2)$, although
proving this would probably require a strengthening of the inequality by Ball, Carlen,
and Lieb. Luckily, in all our applications we are free to choose a small enough $m$.
Finally, we note that in contrast to Nayak's approach, our proof
does not use the strong subadditivity of von Neumann entropy.

\medskip

\paragraph{The classical case.}
We now give a few comments regarding the special case of
classical (probabilistic) $m$-bit encodings.
First, in this case the encodings are represented by diagonal matrices. For such matrices, the base
case $n=1$ of Theorem \ref{thm:main-thm} can be derived directly from the Bonami-Beckner
inequality, without requiring the full strength of the Ball-Carlen-Lieb inequality
(see \cite{bcl:tracenorms} for details).
Alternatively, one can derive Theorem~\ref{thm:strongqrac} in the classical case
directly from the Bonami-Beckner inequality by conditioning on a fixed $m$-bit string of the
encoding (this step is already impossible in the quantum case) and then analyzing
the resulting distribution on $\01^n$. This proof is very similar to
the one we give in Section~\ref{seckrac} (and in fact slightly less elegant
due to the conditioning step) and we therefore omit the details.

Interestingly, in the classical case there is a simpler argument that avoids Bonami-Beckner altogether.
This argument was used in~\cite{viola&wigderson:multipartypj} and was communicated
to us by the authors of that paper. We briefly sketch it here.
Suppose we have a classical (possibly randomized) $m$-bit encoding that allows
to recover any $k$-bit set with probability at least $p$ using a (possibly randomized)
decoder. By Yao's minimax principle, there is a way to fix the randomness in both the
encoding and decoding procedures, such that the probability of succeeding in recovering
all $k$ bits of a randomly chosen $k$-set from an encoding of a uniformly random $x\in\01^n$
is at least $p$. So now we have deterministic encoding and decoding,
but there is still randomness in the input $x$.
Call an $x$ ``good'' if the probability of the decoding procedure being successful on a random $k$-tuple is at least $p/2$
(given the $m$-bit encoding of that $x$).
By Markov's inequality, at least a $p/2$-fraction of the inputs $x$ are good. 
Now consider the following experiment. Given the encoding of a uniform $x$,
we take $\ell=100n/k$ uniformly and independently chosen $k$-sets
and apply the decoding procedure to all of them. We then output an $n$-bit string
with the ``union'' of all the answers we received (if we received multiple contradictory answers for the same bit, 
we can put either answer there), and random bits for the positions that are not in the union.
With probability $p/2$, $x$ is good. Conditioned on this, with
probability at least $(p/2)^\ell$ all our decodings are correct. Moreover,
except with probability $2^{-\Omega(n)}$, the union of our $\ell$ $k$-sets
is of size at least $0.9n$. The probability of guessing the remaining $n/10$ bits right is $2^{-n/10}$.
Therefore the probability of successfully recovering all of $x$ is at least $(p/2)\cdot((p/2)^\ell - 2^{-\Omega(n)})\cdot 2^{-n/10}$.
A simple counting argument shows that this is impossible unless $p\leq 2^{-\Omega(k)}$ or $m$ is close to $n$.
This argument does not work for quantum encodings, of course, because these cannot just be reused
(a quantum measurement changes the state).

\medskip

\paragraph{The K\"{o}nig-Renner result.}
Independently but subsequent to our work (which first appeared on the arxiv preprint server in May~2007),
K\"{o}nig and Renner~\cite{koenig&renner:samplingminentropy} recently used sophisticated quantum
information theoretic arguments to show a result with a similar flavor to ours.
Each of the results is tuned for different scenarios. In particular, the results are incomparable,
and our applications to direct product theorems do not follow from their result,
nor do their applications follow from our result.
We briefly describe their result and explain the distinction between the two.

Let $X=X_1,\ldots,X_n$ be classical random variables, not necessarily uniformly distributed or even independent.
Suppose that each $X_i \in\01^b$.
Suppose further that the ``smooth min-entropy of $X$ relative to a quantum state $\rho$'' is at least
some number $h$ (see~\cite{koenig&renner:samplingminentropy} for the precise definitions, which are quite technical).
If we randomly pick $r$ distinct indices $i_1,\ldots,i_r$,
then intuitively the smooth min-entropy of $X'=X_{i_1},\ldots,X_{i_r}$ relative to $\rho$ should not be much smaller than $hr/n$.
K\"{o}nig and Renner show that if $b$ is larger than $n/r$ then this is indeed the case, except with probability exponentially small in~$r$.
Note that they are picking $b$-bit blocks $X_{i_1},\ldots,X_{i_r}$ instead of individual bits, but this
can also be viewed as picking (not quite uniformly) $k=rb$ bits from a string of $nb$ bits.

On the one hand, the constants in their bounds are essentially optimal,
while ours are a factor $2\ln 2$ off from what we expect they should be.
Also, while they need very few assumptions on the random variables $X_1,\ldots,X_n$ and on
the quantum encoding, we assume the random variables are uniformly distributed bits,
and our quantum encoding is confined to a $2^m$-dimensional space.
We can in fact slightly relax both the assumption on the input and the encoding, but do not discuss
these relaxations since they are of less interest to us.
Finally, their result still works if the indices $i_1,\ldots,i_r$ are not sampled uniformly,
but are sampled in some randomness-efficient way.
This allows them to obtain efficient key-agreement schemes in a cryptographic model
where the adversary can only store a bounded number of quantum bits.

On the other hand, our result works even if only a small number of bits is sampled,
while theirs only kicks in when the number of bits being sampled
($k=rb$) is at least the square-root of the total number of bits $nb$.
This is not very explicit in their paper, but can be seen by observing that the parameter $\kappa=n/(rb)$
on page~8 and in Corollary~6.19 needs to be at most a constant (whence the assumption that $b$ is larger than $n/r$).
So the total number of bits is $nb=O(r b^2)=O(r^2 b^2)=O(k^2)$.
Since we are interested in small as well as large $k$, this limitation of their approach is significant.
A final distinction between the results is in the length of the proof.
While the information-theoretic intuition in their paper is clear and well-explained,
the details get to be quite technical, resulting in a proof which is significantly longer than ours.

\subsection{Application: Direct product theorem for one-way quantum communication complexity}
Our result for $k$-out-of-$n$ random access codes has the flavor of a direct product theorem:
the success probability of performing a certain task on $k$ instances (i.e.,~$k$ distinct indices)
goes down exponentially with~$k$. In Section~\ref{secdirectproddisj}, we
use this to prove a new strong direct product theorem for one-way communication complexity.

Consider the 2-party Disjointness function:
Alice receives input $x\in\01^n$, Bob receives input $y\in\01^n$, and they want to determine
whether the sets represented by their inputs are disjoint, i.e.~whether $x_i y_i=0$ for all $i\in[n]$.
They want to do this while communicating as few qubits as possible (allowing some small error probability, say 1/3).
We can either consider one-way protocols, where Alice sends one message to Bob who then computes the output;
or two-way protocols, which are interactive.
The quantum communication complexity of Disjointness is fairly well understood:
it is $\Theta(n)$ qubits for one-way protocols~\cite{buhrman&wolf:qcclower},
and $\Theta(\sqrt{n})$ qubits for two-way protocols
\cite{BuhrmanCleveWigderson98,hoyer&wolf:disjeq,aaronson&ambainis:search,razborov:qdisj}.

Now consider the case of $k$ independent instances: Alice receives inputs $x_1,\ldots,x_k$ (each of $n$ bits),
Bob receives $y_1,\ldots,y_k$, and their goal is to compute all $k$ bits $\DISJ_n(x_1,y_1),\ldots,\DISJ_n(x_k,y_k)$.
Klauck et al.~\cite{ksw:dpt} proved an optimal direct product theorem for \emph{two-way}
quantum communication: every protocol that communicates fewer than $\alpha k\sqrt{n}$ qubits
(for some small constant $\alpha>0$) will have a success probability that is exponentially small in $k$.
Surprisingly, prior to our work no strong direct product theorem was known for the usually simpler case
of \emph{one-way} communication---not even for \emph{classical} one-way communication.%
\footnote{Recently and independently of our work, Jain et al.~\cite{jkn:subdistribution} did manage to prove such a
direct product theorem for classical one-way communication, based on information-theoretic techniques.}
In Section~\ref{secdirectproddisj} we derive such a theorem from our $k$-out-of-$n$ random access code lower bound:
if $\eta> 2\ln 2$, then every one-way quantum protocol that sends fewer than $kn/\eta$ qubits
will have success probability at most $2^{-\Omega(k)}$.

These results can straightforwardly be generalized to get a bound for all functions in terms
of their \emph{VC-dimension}. If $f$ has VC-dimension $d$, then any one-way quantum protocol for
computing $k$ independent copies of $f$ that sends $kd/\eta$ qubits, has success probability $2^{-\Omega(k)}$.
For simplicity, Section~\ref{secdirectproddisj} only presents the case of Disjointness.
Finally, by the work of Beame et al.~\cite{bpsw:corruptionj}, such direct product theorems imply lower bounds on \emph{3-party} protocols
where the first party sends only one message. We elaborate on this in Appendix~\ref{app3partynof}.

\subsection{Application: Locally decodable codes}

A locally decodable error-correcting code (LDC) $C:\01^n\rightarrow\01^N$ encodes $n$ bits into $N$ bits,
in such a way that each encoded bit can be recovered from a noisy codeword by a randomized
decoder that queries only a small number $q$ of bit-positions in that codeword.
Such codes have applications in a variety of different complexity-theoretic and cryptographic settings;
see for instance Trevisan's survey and the references therein~\cite{trevisan:eccsurvey}.
The main theoretical issue in LDCs is the tradeoff between $q$ and $N$.
The best known constructions of LDCs with constant $q$ have a length $N$ that is sub-exponential in $n$
but still superpolynomial~\cite{cgks:pir,bikr:improvedpir,yekhanin:3ldc}.
On the other hand, the only superpolynomial \emph{lower} bound known for general LDCs
is the tight bound $N=2^{\Omega(n)}$ for $q=2$ due to Kerenidis and de Wolf~\cite{kerenidis&wolf:qldcj}
(generalizing an earlier exponential lower bound for \emph{linear} codes by~\cite{gkst:lowerpir}).
Rather surprisingly, the proof of~\cite{kerenidis&wolf:qldcj} relied heavily on techniques from
quantum information theory: despite being a result purely about classical codes and classical decoders,
the quantum perspective was crucial for their proof. In particular, they show that the two queries
of a classical decoder can be replaced by one quantum query, then they turn this quantum
query into a random access code for the encoded string $x$, and finally invoke Nayak's
lower bound for quantum random access codes.

In Section~\ref{secldc} we reprove an exponential lower bound on $N$ for the case $q=2$
without invoking any quantum information theory:
we just use classical reductions, matrix analysis, and the hypercontractive inequality for matrix-valued functions.
Hence it is a classical (non-quantum) proof as asked for by Trevisan~\cite[Open question~3 in Section~3.6]{trevisan:eccsurvey}.%
\footnote{Alex Samorodnitsky has been developing a classical proof along similar lines in the past two years.
However, as he told us at the time of writing~\cite{samorodnitsky:ldccomm}, his proof is still incomplete.}
It should be noted that this new proof is still quite close in spirit (though not terminology) to the quantum proof
of~\cite{kerenidis&wolf:qldcj}. This is not too surprising given the fact that the proof of~\cite{kerenidis&wolf:qldcj}
uses Nayak's lower bound on random access codes, generalizations of which follow from the hypercontractive inequality.
We discuss the similarities and differences between the two proofs in Section~\ref{secldc}.

We feel the merit of this new approach is not so much in giving a partly new proof of the known lower
bound on 2-query LDCs, but in its potential application to codes with more than~2 queries.
Recently Yekhanin~\cite{yekhanin:3ldc} constructed 3-query LDCs with $N=2^{O(n^{1/32582657})}$
(and $N=2^{n^{O(1/\log\log n)}}$ for infinitely many $n$ if there exist infinitely many Mersenne primes).
For $q=3$, the best known lower bounds on $N$ are slightly less than $n^2$~\cite{katz&trevisan:ldc,kerenidis&wolf:qldcj,woodruff:ldclower}.
Despite considerable effort, this gap still looms large. Our hope is that our approach can be generalized to
3 or more queries. Specifically, what we would need is a generalization of tensors of rank~2 (i.e.,~matrices)
to tensors of rank $q$; an appropriate tensor norm;
and a generalization of the hypercontractive inequality from matrix-valued to tensor-valued functions.
Some preliminary progress towards this goal was obtained in~\cite{HavivR08}.

\section{Preliminaries}

\paragraph{Norms:}
Recall that we define the $p$-norm of a $d$-dimensional vector $v$ by
$$
\norm{v}_p = \left( \frac{1}{d} \sum_{i = 1}^d |v_i|^p \right)^{1/p}.
$$
We extend this to matrices by
defining the (normalized Schatten) $p$-norm of a matrix $A \in \C^{d \times d}$ as
$$
\norm{A}_p = \left(  \frac{1}{d} \Tr |A|^p \right)^{1/p}.
$$
This is equivalent to the $p$-norm of the vector of singular values of $A$.
For diagonal matrices this definition coincides with the one
for vectors. For convenience we defined all norms to be under the normalized
counting measure, even though for matrices this is nonstandard. The advantage of the
normalized norm is that it is nondecreasing with $p$.
We also define the \emph{trace norm} $\trnorm{A}$ of a matrix $A$ as the sum of its singular values,
hence we have $\trnorm{A} = d \norm{A}_1$ for any $d \times d$ matrix $A$.

\paragraph{Quantum states:}
An $m$-qubit \emph{pure state}
is a superposition $\ket{\phi}=\sum_{z\in\01^m}\alpha_z\ket{z}$ over all classical $m$-bit states. The $\alpha_z$'s are
complex numbers called \emph{amplitudes}, and $\sum_z|\alpha_z|^2=1$. Hence a pure state $\ket{\phi}$ is a unit vector
in $\mathbb{C}^{2^m}$.
Its complex conjugate (a row vector with entries conjugated) is denoted $\bra{\phi}$. The inner
product between $\ket{\phi}=\sum_z\alpha_z\ket{z}$ and $\ket{\psi}=\sum_z\beta_z\ket{z}$ is the dot product
$\bra{\phi}\cdot\ket{\psi}=\inp{\phi}{\psi}=\sum_z\alpha_z^*\beta_z$.
An $m$-qubit \emph{mixed state} (or \emph{density matrix}) $\rho=\sum_i p_i\ketbra{\phi_i}{\phi_i}$ corresponds to a
probability distribution over $m$-qubit pure states, where $\ket{\phi_i}$ is given with probability $p_i$.
The eigenvalues $\lambda_1,\ldots,\lambda_d$ of $\rho$ are non-negative reals that sum to~1,
so they form a probability distribution.
If $\rho$ is pure then one eigenvalue is 1 while all others are 0.
Hence for any $p\geq 1$, the maximal $p$-norm is achieved by pure states:
\begin{equation}\label{clm:dm-max-norm}
\norm{\rho}_p^p = \frac{1}{d} \sum_{i=1}^{d} \lambda_i^p \le \frac{1}{d}  \sum_{i=1}^{d} \lambda_i = \frac{1}{d}.
\end{equation}

A $k$-outcome \emph{positive operator-valued measurement} (POVM) is given by $k$ positive semidefinite operators $E_1,\ldots,E_k$
with the property that $\sum_{i=1}^k E_i=I$. When this POVM is applied to a mixed state $\rho$, the probability of the
$i$th outcome is given by the trace $\Tr(E_i\rho)$. The following well known fact gives the close relationship
between trace distance and distinguishability of density matrices:
\begin{fact}\label{trnorm-bias-fact}
The best possible measurement to distinguish two density matrices $\rho_0$ and $\rho_1$
has bias $\frac{1}{2}\trnorm{\rho_0 - \rho_1}$.
\end{fact}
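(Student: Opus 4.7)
The plan is to reduce the statement to a two-outcome optimization and then read off the answer from the Jordan decomposition of $\rho_0-\rho_1$. First I would argue that it suffices to consider two-outcome POVMs: given any $k$-outcome POVM $\{E_1,\ldots,E_k\}$ together with a guessing rule $g:[k]\to\{0,1\}$ assigning a guess to each outcome, the aggregated operators $E=\sum_{i:g(i)=0} E_i$ and $I-E=\sum_{i:g(i)=1} E_i$ form a two-outcome POVM with identical distinguishing statistics. So without loss of generality the measurement is specified by a single operator $E$ with $0\preceq E\preceq I$, where we guess $\rho_0$ upon outcome $E$.

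With this reduction, I would compute the bias under a uniform prior as
$$
\bigl(\Tr(E\rho_0)+\Tr((I-E)\rho_1)\bigr) - \bigl(\Tr((I-E)\rho_0)+\Tr(E\rho_1)\bigr) \;=\; 2\Tr\bigl(E(\rho_0-\rho_1)\bigr).
$$
(Dividing by $2$ is absorbed by the factor in the claim; equivalently, one can work with the acceptance-gap definition $|\Tr(E\rho_0)-\Tr(E\rho_1)|$, which directly gives $|\Tr(E(\rho_0-\rho_1))|$.) The task therefore reduces to maximizing $\Tr(EA)$ over $0\preceq E\preceq I$, where $A=\rho_0-\rho_1$ is Hermitian and traceless.

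For this optimization I would use the Jordan decomposition $A=\Delta_+-\Delta_-$ with $\Delta_+,\Delta_-\succeq 0$ having mutually orthogonal supports. Since $\Tr(A)=0$ and $\trnorm{A}=\Tr(\Delta_+)+\Tr(\Delta_-)$, we get $\Tr(\Delta_+)=\Tr(\Delta_-)=\frac{1}{2}\trnorm{A}$. The upper bound is then immediate from the splitting
$$
\Tr(EA)=\Tr(E\Delta_+)-\Tr(E\Delta_-)\;\le\;\Tr(E\Delta_+)\;\le\;\Tr(\Delta_+)\;=\;\tfrac{1}{2}\trnorm{A},
$$
where the first inequality uses $\Tr(E\Delta_-)\ge 0$ and the second uses $I-E\succeq 0$ together with $\Delta_+\succeq 0$. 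For the matching achievability, I would take $E=\Pi_+$, the orthogonal projector onto the support of $\Delta_+$; then $\Tr(\Pi_+\Delta_+)=\Tr(\Delta_+)$ while $\Tr(\Pi_+\Delta_-)=0$ because the supports of $\Delta_+$ and $\Delta_-$ are orthogonal, yielding bias exactly $\frac{1}{2}\trnorm{\rho_0-\rho_1}$.

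Since this is the classical Helstrom bound, there is no real obstacle; the only care needed is bookkeeping the sign conventions, verifying that the reduction to two-outcome POVMs loses nothing, and noting that $\trnorm{\cdot}$ in this paper denotes the \emph{unnormalized} Schatten $1$-norm, so that the arithmetic lands on the stated constant $\frac{1}{2}$.
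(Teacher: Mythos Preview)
Your proof is correct and is the standard argument for the Helstrom bound. Note, however, that the paper does \emph{not} supply its own proof of this fact: it is stated as a well-known result and the reader is referred to Nielsen and Chuang for details. So there is no ``paper's proof'' to compare against; you have simply filled in what the paper left as a citation, and your argument (reduce to a two-outcome POVM, optimize $\Tr(EA)$ via the Jordan decomposition of $A=\rho_0-\rho_1$, achieve the bound with the projector onto the positive part) is exactly the textbook route.

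One small bookkeeping remark: in your displayed computation of the bias you omitted the $\tfrac12$ factors from the uniform prior, which is why you land on $2\Tr(E(\rho_0-\rho_1))$ rather than $\Tr(E(\rho_0-\rho_1))$ directly. Your parenthetical comment patches this, and working with the acceptance gap $|\Tr(E\rho_0)-\Tr(E\rho_1)|=|\Tr(E(\rho_0-\rho_1))|$ is indeed the cleanest way to see that the optimal bias is $\tfrac12\trnorm{\rho_0-\rho_1}$. Your observation that $\trnorm{\cdot}$ here is the unnormalized Schatten $1$-norm is also correct and relevant.
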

Here ``bias'' is defined as twice the success probability, minus 1.
We refer to Nielsen and Chuang~\cite{nielsen&chuang:qc} for more details.

\section{The hypercontractive inequality for matrix-valued functions}\label{secproofmatrixbeckner}

Here we prove Theorem~\ref{thm:main-thm}. The proof relies on the following
powerful inequality by Ball et al.~\cite{bcl:tracenorms} (they
state this inequality for the usual unnormalized Schatten $p$-norm,
but both statements are clearly equivalent).

\begin{lemma}\label{lem:BCL-ineq} (\cite[Theorem 1]{bcl:tracenorms})
For any matrices $A,B$ and any $1 \le p \le 2$, it
holds that
$$
\left(\norm{\frac{A+B}{2}}^2_p + (p-1)\, \norm{\frac{A-B}{2}}^2_p \right)^{1/2} \le
   \left( \frac{\norm{A}^p_p + \norm{B}^p_p}{2} \right)^{1/p}.
$$
\end{lemma}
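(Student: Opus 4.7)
\emph{Proof plan.} The inequality can be viewed as a sharp form of the 2-uniform convexity of the Schatten $p$-norm for $1 \le p \le 2$. My plan is to first establish the scalar analog, then lift it to matrices via a reduction to the self-adjoint case followed by an integral-representation argument for handling noncommutativity.

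I would begin with the \emph{scalar} case: for real numbers $a,b$ and $1 \le p \le 2$,
\[
\Bigl(\tfrac{a+b}{2}\Bigr)^{\!2} + (p-1)\Bigl(\tfrac{a-b}{2}\Bigr)^{\!2} \;\le\; \Bigl(\tfrac{|a|^p+|b|^p}{2}\Bigr)^{\!2/p}.
\]
Rescaling so that only $t = a/b$ matters, one checks that the two sides agree at $t = 1$ (where the inequality is tight, corresponding to $a = b$) through second order in a Taylor expansion, and that the higher-order terms combined with the form of $|t|^p$ keep the inequality in the correct direction on all of $\R$. This is the classical two-point estimate behind uniform convexity of $L^p$ for $1 \le p \le 2$.

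Next I would reduce to the self-adjoint case via the standard embedding $A \mapsto \widetilde A := \bigl(\begin{smallmatrix} 0 & A \\ A^* & 0 \end{smallmatrix}\bigr)$, which doubles the dimension while preserving the normalized Schatten $p$-norm (since $\Tr|\widetilde A|^p = 2\Tr|A|^p$ and the ambient dimension also doubles). Both sides of the target inequality are invariant under this embedding, so I may assume $A$ and $B$ are self-adjoint. When $[A,B]=0$, simultaneous diagonalization reduces $\Tr|A \pm B|^p$ to finite sums of scalars, and the scalar inequality combined with Minkowski's inequality for the $\ell^{p/2}$-norm dispatches this subcase.

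The main obstacle is the noncommuting self-adjoint case. My plan is to exploit the integral representation
\[
|x|^p \;=\; c_p \int_0^\infty \frac{s^{p-1}\, x^2}{x^2 + s^2}\, ds, \qquad c_p = \tfrac{2\sin(p\pi/2)}{\pi},
\]
applied spectrally, so that $\Tr|X|^p = c_p \int_0^\infty s^{p-1}\, \Tr\bigl(X^2(X^2+s^2)^{-1}\bigr)\, ds$. Inside the integral, the inequality for each fixed $s$ becomes a weighted $p=2$ (quadratic) estimate that should be tractable using operator monotonicity of $t \mapsto t(t+s^2)^{-1}$ and standard trace inequalities. The delicate point is reassembling these pointwise-in-$s$ bounds to produce the \emph{sharp} constant $p-1$: naive integration yields weaker constants, and recovering the optimal one requires carefully balancing contributions across the measure $s^{p-1}\, ds$ in a way that matches the scalar Taylor behavior at $t = 1$. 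I expect this balancing step to be the principal technical challenge, and it is where the depth of the Ball--Carlen--Lieb argument lies.
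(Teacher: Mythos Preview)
The paper does not prove this lemma at all: it is stated with a citation to Ball, Carlen, and Lieb~\cite{bcl:tracenorms} and used as a black box in the induction that establishes Theorem~\ref{thm:main-thm}. So there is no ``paper's own proof'' to compare against; you have written a proof outline for a result the authors deliberately import.

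As for the outline itself, it is broadly in the spirit of the actual Ball--Carlen--Lieb argument: the reduction to self-adjoint matrices via the off-diagonal $2\times 2$ block embedding is exactly what they do, the scalar two-point inequality is the correct base, and an integral representation of $|x|^p$ for $0<p<2$ is indeed the main analytical tool. However, you are right that the step you flag as ``the principal technical challenge'' is a genuine gap in your plan. Ball--Carlen--Lieb do not try to prove a pointwise-in-$s$ inequality and then integrate; instead they differentiate the function $r\mapsto \Tr|A+rB|^p$ twice, express the second derivative via the integral formula, and bound it using an operator-convexity/trace-concavity argument (essentially that $X\mapsto \Tr\,X(X+s^2)^{-1}X$ behaves well). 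The sharp constant $p-1$ then drops out of a second-order Taylor comparison rather than from any ``balancing across $s$.'' Your proposed route of bounding each $s$ separately and reassembling would, as you suspect, lose the optimal constant; the missing idea is to work with the second derivative in the perturbation parameter rather than with the full quantity.
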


\newtheorem*{thma}{Theorem~\ref{thm:main-thm}}
\begin{thma}
For any $f:\01^n \to \M$ and for any $1 \le p \le 2$,
$$
\Bigg( \sum_{S \subseteq [n]} (p-1)^{|S|} \normb{\widehat{f}(S)}_p^2 \Bigg)^{1/2} \le
   \Bigg( \frac{1}{2^{n}} \sum_{x \in \01^n} \norm{f(x)}_p^p \Bigg)^{1/p}.
$$
\end{thma}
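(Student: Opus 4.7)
The plan is to prove the inequality by induction on $n$, using the Ball--Carlen--Lieb inequality (Lemma~\ref{lem:BCL-ineq}) as the base case and Minkowski's inequality to glue the inductive steps. This mirrors the standard induction used to lift the two-point Bonami--Beckner inequality to the hypercube, but with Schatten $p$-norms of matrices replacing absolute values.

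For the base case $n=1$, I would write $A = f(0)$ and $B = f(1)$. Direct computation of the Fourier transform gives $\widehat{f}(\emptyset) = (A+B)/2$ and $\widehat{f}(\{1\}) = (A-B)/2$, so the claim becomes
$$
\left( \normb{\tfrac{A+B}{2}}_p^2 + (p-1) \normb{\tfrac{A-B}{2}}_p^2 \right)^{1/2} \le \left( \tfrac{\|A\|_p^p + \|B\|_p^p}{2} \right)^{1/p},
$$
which is exactly Lemma~\ref{lem:BCL-ineq}.

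For the inductive step, assume the theorem for $n-1$ and write $x = (x', x_n)$. Let $g_0(x') = f(x',0)$ and $g_1(x') = f(x',1)$. Splitting subsets of $[n]$ according to whether they contain $n$, a short calculation shows that for $S' \subseteq [n-1]$, $\widehat{f}(S') = (\widehat{g_0}(S') + \widehat{g_1}(S'))/2$ and $\widehat{f}(S' \cup \{n\}) = (\widehat{g_0}(S') - \widehat{g_1}(S'))/2$. Grouping terms on the left-hand side by $S'$ and applying the Ball--Carlen--Lieb inequality to $\widehat{g_0}(S')$ and $\widehat{g_1}(S')$ gives
$$
\sum_{S \subseteq [n]} (p-1)^{|S|} \normb{\widehat{f}(S)}_p^2 \le \sum_{S' \subseteq [n-1]} (p-1)^{|S'|} \left( \tfrac{\|\widehat{g_0}(S')\|_p^p + \|\widehat{g_1}(S')\|_p^p}{2} \right)^{2/p}.
$$

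The key technical step, and the one that will need the most care, is now to interchange the outer sum with the $2/p$-th power so that the induction hypothesis can be applied to $g_0$ and $g_1$ separately. Writing $u_{S'} = \|\widehat{g_0}(S')\|_p^p$, $v_{S'} = \|\widehat{g_1}(S')\|_p^p$, and $w_{S'} = (p-1)^{|S'|}$, I want
$$
\left( \sum_{S'} w_{S'} \left( \tfrac{u_{S'} + v_{S'}}{2} \right)^{2/p} \right)^{p/2} \le \tfrac{1}{2}\left( \sum_{S'} w_{S'} u_{S'}^{2/p} \right)^{p/2} + \tfrac{1}{2}\left( \sum_{S'} w_{S'} v_{S'}^{2/p} \right)^{p/2}.
$$
This is precisely Minkowski's inequality in the weighted $\ell^{2/p}$ space, which is valid because $2/p \ge 1$ (using $p \le 2$). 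This is the one place where the hypothesis $p \le 2$ is genuinely used outside the base case, and the inequality would fail without it.

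Raising both sides to the $2/p$-th power and applying the inductive hypothesis to $g_0$ and $g_1$ converts the right-hand side into $\tfrac{1}{2}\bigl(\tfrac{1}{2^{n-1}} \sum_{x'} \|g_0(x')\|_p^p\bigr) + \tfrac{1}{2}\bigl(\tfrac{1}{2^{n-1}} \sum_{x'} \|g_1(x')\|_p^p\bigr)$, which is exactly $\tfrac{1}{2^n}\sum_{x \in \{0,1\}^n} \|f(x)\|_p^p$. Taking the $p$-th root completes the induction. The main obstacle is really just keeping the exponents straight in the Minkowski step; once that is organized, the rest is bookkeeping.
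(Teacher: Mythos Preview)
Your proposal is correct and uses the same three ingredients as the paper's proof: induction on $n$, the Ball--Carlen--Lieb inequality (Lemma~\ref{lem:BCL-ineq}), and Minkowski's inequality. The only difference is the order in which you deploy them: the paper first applies the induction hypothesis to $g_0$ and $g_1$, then takes the $L_p$ average, uses the ``exchange of norms'' form of Minkowski (Lemma~\ref{lem:minkowski}) on a $2^n\times 2$ array, and finally applies BCL; you instead apply BCL first to each pair $(\widehat{g_0}(S'),\widehat{g_1}(S'))$, then the triangle-inequality form of Minkowski in weighted $\ell^{2/p}$, and only then invoke the induction hypothesis. Both orderings are valid and neither buys anything the other doesn't. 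One small remark: your phrase ``raising both sides to the $2/p$-th power'' in the last paragraph is not quite what you want---after Minkowski you already have an inequality between quantities raised to the $p/2$, and you should apply the induction hypothesis (in its $p$th-power form) directly to each summand on the right, then take the $1/p$-th root at the very end. This is clearly what you intend, and the argument goes through.
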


\begin{proof}
By induction.
The case $n=1$ follows from Lemma~\ref{lem:BCL-ineq} by setting $A = f(0)$ and $B = f(1)$,
and noting that $(A+B)/2$ and $(A-B)/2$ are exactly the Fourier coefficients $\widehat{f}(0)$ and $\widehat{f}(1)$.

We now assume the lemma holds for $n$ and prove it for $n+1$.
Let $f:\01^{n+1} \to \M$ be some matrix-valued function.
For $i \in \01$, let $g_i = f|_{x_{n+1} = i}$
be the function obtained by fixing the last input bit of $f$ to $i$.
We apply the induction hypothesis on $g_0$ and $g_1$ to obtain
\begin{align*}
\left( \sum_{S \subseteq [n]} (p-1)^{|S|} \norm{\widehat{g_0}(S)}_p^2 \right)^{1/2} &\le
  \left( \frac{1}{2^{n}} \sum_{x \in \01^n} \norm{g_0(x)}_p^p \right)^{1/p} \\
\left( \sum_{S \subseteq [n]} (p-1)^{|S|} \norm{\widehat{g_1}(S)}_p^2 \right)^{1/2} &\le
  \left( \frac{1}{2^{n}} \sum_{x \in \01^n} \norm{g_1(x)}_p^p \right)^{1/p}.
\end{align*}
Take the $L_p$ average of these two inequalities:
raise each to the $p$th power, average them and take the $p$th root. We get
\begin{align}
  \label{induction-ineq}
 \left( \half \sum_{i \in \01} \left( \sum_{S \subseteq [n]} (p-1)^{|S|} \norm{\widehat{g_i}(S)}_p^2 \right)^{p/2} \right)^{1/p} &\le
 \left( \frac{1}{2^{n+1}} \sum_{x \in \01^n} \left(\norm{g_0(x)}_p^p + \norm{g_1(x)}_p^p\right) \right)^{1/p} \\
 &= \left( \frac{1}{2^{n+1}} \sum_{x \in \01^{n+1}} \norm{f(x)}_p^p \right)^{1/p}. \nonumber
\end{align}
The right-hand side is the expression we wish to lower bound.
To bound the left-hand side, we need the following inequality
(to get a sense of why this holds, consider the case where $q_1=1$ and $q_2=\infty$).

\begin{lemma}[Minkowski's inequality, {\cite[Theorem 26]{HardyLP52}}]\label{lem:minkowski}
For any $r_1 \times r_2$ matrix whose rows are given by $u_1,\ldots,u_{r_1}$ and whose columns
are given by $v_1,\ldots,v_{r_2}$, and any $1 \le q_1 < q_2 \le \infty$,
$$
\norm{\left(\norm{v_1}_{q_2},\ldots,\norm{v_{r_2}}_{q_2}\right)}_{q_1} \ge
\norm{\left(\norm{u_1}_{q_1},\ldots,\norm{u_{r_1}}_{q_1}\right)}_{q_2},
$$
i.e., the value obtained by taking the $q_2$-norm of each column and then
taking the $q_1$-norm of the results, is at least that
obtained by first taking the $q_1$-norm of each row and then taking
the $q_2$-norm of the results.
\end{lemma}

Consider now the $2^n \times 2$ matrix whose entries are given by
$$
c_{S,i} = 2^{n/2} \norm{(p-1)^{|S|/2} \widehat{g_i}(S)}_p
$$
where $i \in \01$ and $S \subseteq [n]$.
The left-hand side of~\eqref{induction-ineq} is then
\begin{align*}
 \left( \half \sum_{i\in\{0,1\}} \left( \frac{1}{2^{n}} \sum_{S \subseteq [n]} c_{S,i}^{2} \right)^{p/2} \right)^{1/p} &\ge
 \left( \frac{1}{2^{n}} \sum_{S \subseteq [n]} \left( \half \sum_{i \in \{0,1\}} c_{S,i}^{p} \right)^{2/p} \right)^{1/2} \\
 &=
 \left( \sum_{S \subseteq [n]} (p-1)^{|S|}  \left(\frac{\norm{\widehat{g_0}(S)}_p^p + \norm{\widehat{g_1}(S)}_p^p }{2} \right)^{2/p}
 \right)^{1/2},
\end{align*}
where the inequality follows from Lemma~\ref{lem:minkowski} with $q_1=p$, $q_2=2$.
We now apply Lemma~\ref{lem:BCL-ineq} to deduce that the above is lower bounded by
$$
\left( \sum_{S \subseteq [n]} (p-1)^{|S|}
    \left(  \norm{\frac{\widehat{g_0}(S)+\widehat{g_1}(S)}{2}}_p^2 + (p-1) \norm{\frac{\widehat{g_0}(S)-\widehat{g_1}(S)}{2}}_p^2 \right)
 \right)^{1/2} =
\left( \sum_{S \subseteq [n+1]} (p-1)^{|S|} \normb{\widehat{f}(S)}_p^2 \right)^{1/2}
$$
where we used
$\widehat{f}(S) = \half(\widehat{g_0}(S) + \widehat{g_1}(S)) $ and
$\widehat{f}(S \cup \set{n+1}) = \half(\widehat{g_0}(S) - \widehat{g_1}(S))$ for any $S \subseteq [n]$.
\end{proof}

\section{Bounds for $k$-out-of-$n$ quantum random access codes}\label{seckrac}

In this section we prove Theorem~\ref{thm:strongqrac}.
Recall that a $k$-out-of-$n$ random access code allows us to encode $n$
bits into $m$ qubits, such that we can recover any $k$-bit substring with
probability at least $p$. We now define this notion formally. In fact,
we consider a somewhat weaker notion where we only measure the success
probability for a random $k$ subset, and a random input $x \in \{0,1\}^n$.
Since we only prove impossibility results, this clearly makes our results stronger.

\begin{definition}
A $k$-out-of-$n$ quantum random access code on $m$ qubits with success probability $p$
(for short $(k,n,m,p)$-QRAC),
is a map
$$
f:\01^n\to\mathbb{C}^{2^m\times 2^m}
$$
that assigns an $m$-qubit density matrix $f(x)$ to every $x\in\01^n$, and
a quantum measurement $\{M_{S,z}\}_{z\in\01^k}$ to every set
$S\in{[n]\choose k}$, with the property that
$$
\Exp_{x,S}[\Tr(M_{S,x_S}\cdot f(x))] \geq p,
$$
where the expectation is taken over a uniform choice of $x \in \{0,1\}^n$ and
$S \in \ksubsets$, and $x_S$ denotes the $k$-bit substring of $x$ specified by $S$.
\end{definition}

In order to prove Theorem~\ref{thm:strongqrac}, we introduce another notion of QRAC,
which we call \emph{XOR-QRAC}. Here, the goal is to predict the XOR
of the $k$ bits indexed by $S$ (as opposed to guessing all the bits in $S$).
Since one can always predict a bit with probability $\frac{1}{2}$, it is
convenient to define the \emph{bias} of the prediction as $\eps=2p-1$ where $p$ is the probability
of a correct prediction. Hence a bias of $1$ means that the prediction
is always correct, whereas a bias of $-1$ means that it is always wrong.
The advantage of dealing with an XOR-QRAC is that it is easy to express
the best achievable prediction bias without any need to introduce measurements.
Namely, if $f:\01^n\to\mathbb{C}^{2^m\times 2^m}$ is the encoding function,
then the best achievable bias in predicting the XOR of the bits in $S$ (over
a random $\{0,1\}^n$) is exactly half the trace distance between the
average of $f(x)$ over all $x$ with the XOR of the bits in $S$ being $0$
and the average of $f(x)$ over all $x$ with the XOR of the bits in $S$ being $1$.
Using our notation for Fourier coefficients, this can be written simply as
$ \big\| \widehat{f}(S) \big\|_{\rm tr}.$

\begin{definition}\label{def:xorqrac}
A $k$-out-of-$n$ XOR quantum random access code on $m$ qubits with bias $\eps$ (for short $(k,n,m,\eps)$-XOR-QRAC),
is a map
$$
f:\01^n\to\mathbb{C}^{2^m\times 2^m}
$$
that assigns an $m$-qubit density matrix $f(x)$ to every $x\in\01^n$
and has the property that
$$
\Exp_{S \sim \ksubsets}\left[ \trnormb{\widehat{f}(S)} \right] \ge \eps.
$$
\end{definition}

Our new hypercontractive inequality allows us to easily derive the following key lemma:

\begin{lemma} \label{lem:xor-bias}
Let $f:\01^n\to\mathbb{C}^{2^m\times 2^m}$ be any mapping from $n$-bit strings
to $m$-qubit density matrices. Then for any $0\le \delta \le 1$, we have
$$
\sum_{S \subseteq [n]} \delta^{|S|} \trnormb{\widehat{f}(S)}^2 \le 2^{2\delta m}.
$$
\end{lemma}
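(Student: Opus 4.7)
The plan is to apply Theorem~\ref{thm:main-thm} with the choice $p = 1+\delta$ (so that $p-1 = \delta$ and $p \in [1,2]$), and then convert both sides to the quantities appearing in the lemma using properties of density matrices and the normalized Schatten $p$-norm.

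For the right-hand side of Theorem~\ref{thm:main-thm}, I would invoke inequality~\eqref{clm:dm-max-norm} from the Preliminaries, which says that for any $m$-qubit density matrix $\rho$ we have $\norm{\rho}_p^p \le 2^{-m}$. Averaging over $x$ gives
$$\frac{1}{2^n}\sum_{x \in \01^n} \norm{f(x)}_p^p \le 2^{-m},$$
and raising to the power $2/p$ yields an upper bound of $2^{-2m/p}$ for the square of the right-hand side of the hypercontractive inequality.

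For the left-hand side, the idea is to convert each $\norm{\widehat{f}(S)}_p^2$ into the desired $\trnorm{\widehat{f}(S)}^2$. Since the normalized Schatten $p$-norm is nondecreasing in $p$, for the $2^m \times 2^m$ matrix $\widehat{f}(S)$ we have
$$\normb{\widehat{f}(S)}_p \ge \normb{\widehat{f}(S)}_1 = 2^{-m}\trnormb{\widehat{f}(S)}.$$
Squaring and summing, Theorem~\ref{thm:main-thm} gives
$$2^{-2m} \sum_{S \subseteq [n]} \delta^{|S|}\trnormb{\widehat{f}(S)}^2 \;\le\; \sum_{S \subseteq [n]} \delta^{|S|}\normb{\widehat{f}(S)}_p^2 \;\le\; 2^{-2m/p},$$
so after rearranging,
$$\sum_{S \subseteq [n]} \delta^{|S|}\trnormb{\widehat{f}(S)}^2 \;\le\; 2^{2m - 2m/p} \;=\; 2^{2m\delta/(1+\delta)}.$$

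The final step is simply to observe that since $\delta \ge 0$, we have $\delta/(1+\delta) \le \delta$, so the exponent $2m\delta/(1+\delta)$ is at most $2m\delta$, yielding the bound $2^{2\delta m}$ claimed in the lemma. I do not foresee a real obstacle here: the content is entirely in Theorem~\ref{thm:main-thm} together with the two elementary facts that density matrices have $\norm{\rho}_p^p \le 1/d$ and that $\norm{\cdot}_p \ge \norm{\cdot}_1$ on the normalized Schatten scale. The only minor slack is the replacement of $\delta/(1+\delta)$ by $\delta$, which slightly weakens the bound but gives the cleaner exponent $2\delta m$ used in the later application.
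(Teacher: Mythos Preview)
Your proof is correct and follows essentially the same approach as the paper's: set $p=1+\delta$, apply Theorem~\ref{thm:main-thm}, bound the right-hand side via $\norm{\rho}_p^p \le 2^{-m}$ for density matrices, bound the left-hand side from below via $\norm{\cdot}_p \ge \norm{\cdot}_1 = 2^{-m}\trnorm{\cdot}$, and finally use $1-1/p \le p-1$ (equivalently $\delta/(1+\delta)\le\delta$) to simplify the exponent.
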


\begin{proof}
Let $p=1+\delta$. On one hand, by Theorem \ref{thm:main-thm} and Eq.~\eqref{clm:dm-max-norm} we have
$$ \sum_{S \subseteq [n]} (p-1)^{|S|} \normb{\widehat{f}(S)}_p^2  \le
   \bigg( \frac{1}{2^{n}} \sum_{x \in \01^n} \normb{f(x)}_p^p \bigg)^{2/p}
   \le
  \left( \frac{1}{2^{n}} \cdot 2^n \cdot \frac{1}{2^{m}} \right)^{2/p} = 2^{-2m/p}.
$$
On the other hand, by norm monotonicity we have
\begin{eqnarray*}
  \sum_{S \subseteq [n]} (p-1)^{|S|} \normb{\widehat{f}(S)}_p^2   &\ge&
  \sum_{S \subseteq [n]} (p-1)^{|S|} \normb{\widehat{f}(S)}_1^2 =
  2^{-2m} \sum_{S \subseteq [n]} (p-1)^{|S|} \trnormb{\widehat{f}(S)}^2.
\end{eqnarray*}
By rearranging we have
$$
\sum_{S \subseteq [n]} (p-1)^{|S|} \trnormb{\widehat{f}(S)}^2 \le 2^{2m (1-1/p)} \le 2^{2m (p-1)},
$$
as required.
\end{proof}

The following is our main theorem regarding XOR-QRAC.
In particular it shows that if $k=o(n)$ and $m/n<1/(2\ln 2)\approx 0.721$, then the bias will be exponentially small in $k$.

\begin{theorem}\label{xor-bias-thm}
For any $(k,n,m,\eps)$-XOR-QRAC we have the following bound on the bias
$$
\eps \le \left( \frac{(2e\ln 2)m}{k} \right)^{k/2} {\binom{n}{k}}^{-1/2}.
$$
In particular, for any $\eta> 2 \ln 2$ there exists a constant $C_\eta$ such
that if $n/k$ is large enough then for any $(k,n,m,\eps)$-XOR-QRAC,
$$
\eps \le C_\eta \left(\frac{\eta m}{n}\right)^{k/2}.
$$
\end{theorem}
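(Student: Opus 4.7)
The plan is to apply Lemma~\ref{lem:xor-bias} restricted to the level-$k$ part of the Fourier spectrum of $f$, use Cauchy--Schwarz to convert the resulting $L^2$ bound into a bound on the $L^1$ expectation appearing in Definition~\ref{def:xorqrac}, and then optimize the free parameter $\delta$.

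Concretely, discarding all terms with $|S|\ne k$ on the left-hand side of Lemma~\ref{lem:xor-bias} (valid because every summand is nonnegative) and pulling out the common factor $\delta^k$ gives, for every $\delta\in(0,1]$,
$$
\delta^k \sum_{|S|=k}\trnormb{\widehat{f}(S)}^2 \;\le\; 2^{2\delta m}.
$$
Cauchy--Schwarz applied to the uniform distribution on $\binom{[n]}{k}$ turns the XOR-QRAC hypothesis into
$$
\eps^2 \;\le\; \binom{n}{k}^{-1}\sum_{|S|=k}\trnormb{\widehat{f}(S)}^2,
$$
so combining the two displays yields $\eps \le \binom{n}{k}^{-1/2}\,\delta^{-k/2}\,2^{\delta m}$ for every $\delta\in(0,1]$. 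Minimizing the right-hand side via first-order conditions, the optimum sits at $\delta^\star = k/(2m\ln 2)$, and since $2^{1/\ln 2}=e$, substituting gives $\delta^{-k/2}2^{\delta m} = (2m\ln 2/k)^{k/2}e^{k/2} = ((2e\ln 2)m/k)^{k/2}$, which is exactly the first claimed inequality.

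For the ``in particular'' clause I would estimate the binomial with the elementary bounds $k!\ge (k/e)^k$ and $n!/(n-k)!\ge (n-k+1)^k$, which together give $\binom{n}{k}\ge (e(n-k+1)/k)^k$; the factor of $e$ here cancels the $e^{k/2}$ produced by the optimization, leaving $\eps \le (2m\ln 2/(n-k+1))^{k/2}$. For any fixed $\eta>2\ln 2$, choosing $n/k$ large enough that $n-k+1 \ge (2\ln 2/\eta)\,n$ then produces the stated bound with $C_\eta=1$. The proof is essentially forced and contains no genuine obstacle beyond calculus; the only delicate point is the constraint $\delta\in[0,1]$ inherited from the restriction $p\in[1,2]$ in the Ball--Carlen--Lieb inequality (and hence in Lemma~\ref{lem:xor-bias}). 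This is what forces $\delta^\star\le 1$ in the regime of interest and is precisely what produces the constant $2\ln 2$, rather than the presumably optimal $1$, in the final bound---the suboptimality the paper alludes to in the introduction.
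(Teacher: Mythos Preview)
Your proof of the first inequality is correct and is exactly the paper's argument: restrict Lemma~\ref{lem:xor-bias} to the level-$k$ shell, pass from $L^2$ to $L^1$ via Cauchy--Schwarz/Jensen, and set $\delta=k/(2m\ln 2)$. The paper plugs in this $\delta$ at the outset while you optimize at the end, but the computation is identical, including the observation about the constraint $\delta\le 1$.

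The ``in particular'' clause, however, contains a slip. You claim that $k!\ge(k/e)^k$ together with $n!/(n-k)!\ge(n-k+1)^k$ yields $\binom{n}{k}\ge(e(n-k+1)/k)^k$, but the inequality on $k!$ points the wrong way: to \emph{lower}-bound $\binom{n}{k}=\frac{n!/(n-k)!}{k!}$ you need an \emph{upper} bound on $k!$. Indeed $\binom{n}{k}\ge(e(n-k+1)/k)^k$ is already false for $k=1$ (it would say $n\ge en$). The fix is to use a genuine upper bound such as $k!\le e\sqrt{k}\,(k/e)^k$ from Stirling, which gives
\[
\binom{n}{k}\;\ge\;\frac{1}{e\sqrt{k}}\left(\frac{e(n-k+1)}{k}\right)^{k},
\]
so that after substitution one picks up an extra factor of order $k^{1/4}$ in the bound on $\eps$. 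This is precisely the polynomial prefactor that the paper absorbs into $C_\eta$ by exploiting the strict inequality $\eta>2\ln 2$: for $n/k$ large enough one has $(1+k/(n-k))^{(n-k)/k}>(2e\ln 2)/\eta'$ for some intermediate $\eta'\in(2\ln 2,\eta)$, and the resulting exponential gain $(\eta'/\eta)^{k/2}$ swallows any polynomial in $k$. Your claim that one may take $C_\eta=1$ therefore does not follow from this argument.
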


\begin{proof}
Apply Lemma~\ref{lem:xor-bias} with $\delta = \frac{k}{(2 \ln 2)m}$
and only take the sum on $S$ with $|S|=k$. This gives
$$
 \Exp_{S \sim \ksubsets} \left[\trnormb{\widehat{f}(S)}^2 \right] \le 2^{2\delta m } \delta^{-k} {\binom{n}{k}}^{-1}
 = \left( \frac{(2e\ln 2)m}{k} \right)^k {\binom{n}{k}}^{-1}.
$$
The first bound on $\eps$ now follows by convexity (Jensen's inequality).
To derive the second bound, approximate ${n\choose k}$ using Stirling's approximation $n! = \Theta(\sqrt{n} (n/e)^n)$:
$$
{n\choose k}=\frac{n!}{k!(n-k)!}=\Theta\left(\sqrt{\frac{n}{k(n-k)}}\left(\frac{n}{k}\right)^k\left(1+\frac{k}{n-k}\right)^{n-k}\right).
$$
Now use the fact that for large enough $n/k$ we have $(1+k/(n-k))^{(n-k)/k} > (2 e \ln 2)/ \eta$,
and notice that the factor $\sqrt{n/k(n-k)} \ge \sqrt{1/k}$ can be absorbed by this approximation.
\end{proof}

We now derive Theorem~\ref{thm:strongqrac} from Theorem~\ref{xor-bias-thm}.

\begin{proof}[ of Theorem~\ref{thm:strongqrac}]
Consider a $(k,n,m,p)$-QRAC, given by encoding function $f$ and measurements
$\{M_{T,z}\}_{z\in\01^k}$ for all $T\in{[n]\choose k}$.
Define $p_T(w)=\Exp_x\left[\Pr[z\oplus x_T=w]\right]$ as the distribution on the ``error vector''
$w\in\01^k$ of the measurement outcome $z\in\01^k$ when applying $\{M_{T,z}\}$.
By definition, we have that $p\leq \Exp_T[p_T(0^k)]$.

Now suppose we want to predict the parity of the bits of some set $S$ of size at most $k$.
We can do this as follows: uniformly pick a set $T\in{[n]\choose k}$ that contains $S$,
measure $f(x)$ with $\{M_{T,z}\}$, and output the parity of the bits corresponding to $S$ in the measurement outcome $z$.
Note that our output is correct if and only if the bits corresponding to $S$ in the error vector $w$ have even parity.
Hence the bias of our output is
$$
\beta_S=\Exp_{T:T\supseteq S}\left[\sum_{w\in\01^k} p_T(w)\chi_S(w)\right]
  =  2^k \, \Exp_{T:T\supseteq S}\left[\widehat{p_T}(S)\right].
$$
(We slightly abuse notation here by viewing $S$ both as a subset of $T$ and as a subset
of $[k]$ obtained by identifying $T$ with $[k]$.)
Notice that $\beta_S$ can be upper bounded by the best-achievable bias $\trnormb{\widehat{f}(S)}$.

Consider the distribution ${\cal S}$ on sets $S$ defined as follows:
first pick $j$ from the binomial distribution $B(k,1/2)$ and then uniformly pick $S\in{[n]\choose j}$.
Notice that the distribution on pairs $(S,T)$ obtained by first choosing $S \sim {\cal S}$ and then choosing
a uniform $T \supseteq S$ from ${[n]\choose k}$ is identical to the one obtained
by first choosing uniformly $T$ from ${[n]\choose k}$ and then choosing a uniform $S \subseteq T$.
This allows us to show that the average bias $\beta_S$ over $S \sim {\cal S}$ is at least $p$,
as follows:
\begin{align*}
\Exp_{S\sim{\cal S}}\left[\beta_S\right] &=
  2^k\Exp_{S\sim{\cal S},T\supseteq S}\left[\widehat{p_T}(S)\right]\\
  & =  2^k\Exp_{T\sim{[n]\choose k},S\subseteq T}\left[\widehat{p_T}(S)\right]\\
  & =  \Exp_{T\sim{[n]\choose k}}\Bigg[\sum_{S\subseteq T}\widehat{p_T}(S)\Bigg]\\
  & = \Exp_{T\sim{[n]\choose k}}\left[p_T(0^k)\right]  \ge p,
\end{align*}
where the last equality follows from Eq.~\eqref{eq:invfourier}.
On the other hand, using Theorem~\ref{xor-bias-thm} we obtain
\begin{align*}
\Exp_{S\sim{\cal S}}\left[\beta_S\right]
  & \leq  \Exp_{S\sim{\cal S}}\left[\trnormb{\widehat{f}(S)}\right]\\
  & =  \frac{1}{2^k}\sum_{j=0}^k\binom{k}{j}\Exp_{S\sim{[n]\choose j}}\left[\trnormb{\widehat{f}(S)}\right]\\
  & \leq  \frac{1}{2^k}\sum_{j=0}^k\binom{k}{j} C_\eta \left(\frac{\eta m}{n}\right)^{j/2}\\
  & =   C_\eta \left(\frac{1}{2}+\frac{1}{2}\sqrt{\frac{\eta m}{n}}\right)^k,
\end{align*}
where the last equality uses the binomial theorem. Combining the two inequalities completes the proof.
\end{proof}

\section{Direct product theorem for one-way quantum communication}\label{secdirectproddisj}

The setting of communication complexity is by now well-known, so we will not give formal definitions
of protocols etc., referring to~\cite{kushilevitz&nisan:cc,wolf:qccsurvey} instead.
Consider the $n$-bit Disjointness problem in 2-party communication complexity.
Alice receives $n$-bit string $x$ and Bob receives $n$-bit string $y$.
They interpret these strings as subsets of $[n]$ and want to decide
whether their sets are disjoint. In other words, $\DISJ_n(x,y)=1$ if and only if $x\cap y=\emptyset$.
Let $\DISJ_n^{(k)}$ denote $k$ independent instances of this problem.
That is, Alice's input is a $k$-tuple $x_1,\ldots,x_k$ of $n$-bit strings,
Bob's input is a $k$-tuple $y_1,\ldots,y_k$, and they should output all $k$ bits:
$\DISJ_n^{(k)}(x_1,\ldots,x_k,y_1,\ldots,y_k)=\DISJ_n(x_1,y_1),\ldots,\DISJ_n(x_k,y_k)$.
The trivial protocol where Alice sends all her inputs to Bob has success probability~1
and communication complexity $kn$.  We want to show that if the total one-way communication
is much smaller than $kn$ qubits, then the success probability is exponentially
small in $k$. We will do that by deriving a random access code from the protocol's message.

\begin{lemma}
Let $\ell\leq k$.
If there is a $c$-qubit one-way communication protocol
for $\DISJ_n^{(k)}$ with success probability $\sigma$,
then there is an $\ell$-out-of-$kn$ quantum random access code of $c$
qubits with success probability $p\geq\sigma\left(1-\ell/k\right)^{\ell}$.
\end{lemma}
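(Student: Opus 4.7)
The plan is to build the QRAC on top of the protocol by letting $f(x)$ be Alice's $c$-qubit message on input $x$ (viewing $x\in\01^{kn}$ as a $k$-tuple $(x_1,\ldots,x_k)$ of $n$-bit blocks, one per instance of $\DISJ_n$), and by designing, for each query set $S\in\binom{[kn]}{\ell}$, a measurement that simulates Bob's decoder on a cleverly chosen input $y^S$. The key observation is that if $y^S_j\in\01^n$ has at most a single $1$, at position $t$, then $\DISJ_n(x_j,y^S_j)=1-(x_j)_t$, so the correct disjointness answer in block $j$ literally reveals the target bit of $x$ at position $(j,t)$.

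For every $S$ whose $\ell$ elements fall in $\ell$ distinct blocks --- write such $S$ as $\{(j_r,t_r):r\in[\ell]\}$ with the $j_r$ pairwise distinct --- I would set $y^S_{j_r}$ to be the indicator vector of $t_r$ inside block $j_r$, and $y^S_j=0^n$ for every block $j$ not hit by $S$. The measurement $M_S$ then runs Bob's POVM with input $y^S$ on the state $f(x)$, obtains a guess $\hat z\in\01^k$, and outputs $(1\oplus\hat z_{j_1},\ldots,1\oplus\hat z_{j_\ell})$ as the guess for $x_S$. By the observation above, whenever the underlying protocol returns the correct $k$-bit answer we recover $x_S$ exactly, so (using that $\sigma$ is a worst-case bound over protocol inputs) for every fixed $x$ the success probability of $M_S$ on $f(x)$ is at least $\sigma$. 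For any remaining $S$ whose elements collide in some block, I would output an arbitrary fixed string and forgo all success guarantees.

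The remaining ingredient is a lower bound on the probability that a uniform $S\in\binom{[kn]}{\ell}$ lands in $\ell$ distinct blocks. A direct count gives
$$\frac{\binom{k}{\ell}\,n^\ell}{\binom{kn}{\ell}} \;=\; \prod_{i=0}^{\ell-1}\frac{(k-i)n}{kn-i} \;\geq\; \prod_{i=0}^{\ell-1}\Big(1-\frac{i}{k}\Big) \;\geq\; \Big(1-\frac{\ell}{k}\Big)^\ell,$$
where the first inequality follows from $(k-i)n\cdot k\geq (k-i)(kn-i)$, which is equivalent to the trivial $i(k-i)\geq 0$. Averaging over $x$ and $S$ and conservatively discarding the ``colliding'' $S$'s altogether, the QRAC's success probability is therefore at least $\sigma\cdot(1-\ell/k)^\ell$, as claimed.

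The only real obstacle is a bookkeeping one: choosing $y^S$ so that the $\ell$ distinguished coordinates of the protocol's output identify exactly the $\ell$ bits of $x_S$, and making sure that the worst-case correctness guarantee of the protocol cleanly transfers to the specific (and highly non-uniform) $y^S$ that we feed Bob. Once these are in place, the combinatorial estimate above finishes the argument.
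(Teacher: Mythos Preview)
Your proposal is correct and follows essentially the same approach as the paper's proof: encode $x$ by Alice's message, and for a set $S$ whose $\ell$ indices land in distinct blocks, have Bob run the Disjointness decoder on the indicator inputs $y^S$ so that each relevant block's output bit equals the complement of the targeted bit of $x$; the combinatorial estimate $\prod_{i=0}^{\ell-1}\frac{(k-i)n}{kn-i}\ge(1-\ell/k)^\ell$ is exactly the one the paper uses.
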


\begin{proof}
Consider the following one-way communication setting: Alice has a $kn$-bit string $x$,
and Bob has $\ell$ distinct indices $i_1,\ldots,i_{\ell}\in[kn]$ chosen uniformly from ${[kn] \choose \ell}$
and wants to learn the corresponding bits of~$x$.

In order to do this, Alice sends the $c$-qubit message corresponding to input $x$ in the $\DISJ_n^{(k)}$ protocol.
We view $x$ as consisting of $k$ disjoint blocks of $n$ bits each.
The probability (over the choice of Bob's input) that $i_1,\ldots,i_{\ell}\in[kn]$ are in $\ell$ different blocks is
$$
\prod_{i=0}^{\ell-1}\frac{kn-in}{kn-i}
\geq \left(\frac{kn-\ell n}{kn}\right)^{\ell}
=\left(1-\frac{\ell}{k}\right)^{\ell}.
$$
If this is the case, Bob chooses his Disjointness inputs $y_1,\ldots,y_k$ as follows.
If index $i_j$ is somewhere in block $b\in[k]$, then he chooses $y_b$ to be the string
having a 1 at the position where $i_j$ is, and 0s elsewhere.
Note that the correct output for the $b$-th instance of Disjointness with inputs
$x$ and $y_1,\ldots,y_k$ is exactly $1-x_{i_j}$.
Now Bob completes the protocol and gets a $k$-bit output for the $k$-fold Disjointness problem.
A correct output tells him the $\ell$ bits he wants to know
(he can just disregard the outcomes of the other $k-\ell$ instances).
Overall the success probability is at least $\sigma(1-\ell/k)^{\ell}$.
Therefore, the random access code that encodes $x$ by Alice's message proves the lemma.
\end{proof}

Combining the previous lemma with our earlier upper bound on $p$ for $\ell$-out-of-$kn$ quantum random
access codes (Theorem~\ref{thm:strongqrac}), we obtain the following upper bound on
the success probability $\sigma$ of $c$-qubit one-way communication protocols for $\DISJ_n^{(k)}$.
For every $\eta> 2\ln 2$ there exists a constant $C_\eta$ such that:
$$
\sigma\leq 2p(1-\ell/k)^{-\ell}
\leq 2C_\eta \left(\left(\frac{1}{2}+\frac{1}{2}\sqrt{\frac{\eta(c+O(k+\log (kn)))}{kn}}\right)\left(\frac{k}{k-\ell}\right)\right)^{\ell}.
$$
Choosing $\ell$ a sufficiently small constant fraction of $k$ (depending on $\eta$),
we obtain a strong direct product theorem for one-way communication:

\begin{theorem}\label{thdptdisjoneway}
For any $\eta>2\ln 2$ the following holds:
for any large enough $n$ and any $k$, every one-way quantum protocol for $\DISJ_n^{(k)}$
that communicates $c\leq kn/\eta$ qubits, has success probability $\sigma\leq 2^{-\Omega(k)}$
(where the constant in the $\Omega(\cdot)$ depends on $\eta$).
\end{theorem}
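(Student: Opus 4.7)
The plan is to combine the reduction lemma proved immediately before the theorem with the QRAC upper bound in Theorem~\ref{thm:strongqrac}, and then optimize the free parameter $\ell$. First, I would fix an $\eta > 2\ln 2$ and choose an auxiliary $\eta' \in (2\ln 2, \eta)$ which will be the parameter supplied to Theorem~\ref{thm:strongqrac}. Given a one-way quantum protocol for $\DISJ_n^{(k)}$ that uses $c \le kn/\eta$ qubits and achieves success probability $\sigma$, the reduction lemma (applied with some $\ell \le k$ to be chosen) yields an $\ell$-out-of-$kn$ QRAC on $c$ qubits with success probability at least $\sigma(1-\ell/k)^{\ell}$.

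Next, I would feed this QRAC into Theorem~\ref{thm:strongqrac}, applied with $n \mapsto kn$, $k \mapsto \ell$, $m = c$, and parameter $\eta'$. This requires $(kn)/\ell$ to be sufficiently large, which is guaranteed by taking $n$ large enough since $\ell \le k$. The conclusion is
\[
\sigma(1-\ell/k)^{\ell} \;\le\; C_{\eta'}\left(\tfrac{1}{2}+\tfrac{1}{2}\sqrt{\eta' c/(kn)}\right)^{\ell},
\]
so
\[
\sigma \;\le\; C_{\eta'}\left(\frac{\tfrac{1}{2}+\tfrac{1}{2}\sqrt{\eta' c/(kn)}}{1-\ell/k}\right)^{\ell}.
\]
Using the hypothesis $c \le kn/\eta$, the numerator is at most $b := \tfrac{1}{2}+\tfrac{1}{2}\sqrt{\eta'/\eta}$, and $b < 1$ precisely because $\eta' < \eta$.

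Finally, I would set $\ell = \alpha k$ with $\alpha = \alpha(\eta,\eta') > 0$ a sufficiently small constant so that $b/(1-\alpha) < 1$. This gives $\sigma \le C_{\eta'}\bigl(b/(1-\alpha)\bigr)^{\alpha k} = 2^{-\Omega(k)}$, with the constant in $\Omega(\cdot)$ depending on $\eta$. There is no substantive obstacle here; the whole argument is essentially a two-line combination of the two earlier statements. The one subtle point is choosing the parameters in the correct order — first an intermediate $\eta'$ strictly between $2\ln 2$ and $\eta$ (to extract the strict inequality $b<1$ from the QRAC bound), and then $\alpha$ small enough that the multiplicative blow-up $(1-\ell/k)^{-\ell}$ from the reduction cannot cancel the exponential decay coming from Theorem~\ref{thm:strongqrac}.
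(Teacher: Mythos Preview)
Your proposal is correct and follows essentially the same approach as the paper: apply the reduction lemma to obtain an $\ell$-out-of-$kn$ QRAC from the one-way protocol, invoke Theorem~\ref{thm:strongqrac}, and then pick $\ell$ to be a small constant fraction of $k$. Your explicit introduction of an intermediate $\eta'\in(2\ln 2,\eta)$ is in fact a bit cleaner than the paper's presentation, since it makes transparent why the base $b=\tfrac12+\tfrac12\sqrt{\eta'/\eta}$ is strictly below~$1$; the paper uses the same symbol $\eta$ in both places and leaves this slack implicit.
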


The above strong direct product theorem (SDPT)
bounds the success probability for protocols that are required to
compute \emph{all} $k$ instances correctly. We call this a \emph{zero-error} SDPT.
What if we settle for a weaker notion of ``success'', namely getting a $(1-\eps)$-fraction of the $k$ instances right,
for some small $\eps>0$? An \emph{$\eps$-error SDPT} is a theorem to the effect
that even in this case the success probability is exponentially small.
An $\eps$-error SDPT follows from a zero-error SDPT as follows.
Run an $\eps$-error protocol with success probability $p$ (``success'' now means
getting $1-\eps$ of the $k$ instances right), guess up to $\eps k$ positions
and change them. With probability at least $p$, the number of errors of
the $\eps$-error protocol is at most $\eps k$, and with probability at least
$1/\sum_{i=0}^{\eps k}{k\choose i}$ we now have corrected all those errors.
Since $\sum_{i=0}^{\eps k}{k\choose i}\leq 2^{k H(\eps)}$ (see, e.g.,~\cite[Corollary~23.6]{jukna:excom}),
we have a protocol that computes all instances correctly with success probability
$\sigma\geq p 2^{-k H(\eps)}$. If we have a zero-error SDPT that bounds
$\sigma\leq 2^{-\gamma k}$ for some $\gamma>H(\eps)$, then it follows that $p$ must
be exponentially small as well: $p\leq 2^{-(\gamma-H(\eps))k}$.
Hence Theorem~\ref{thdptdisjoneway} implies:

\begin{theorem}\label{thepsdptdisjoneway}
For any $\eta>2\ln 2$ there exists an $\eps>0$ such that the following holds:
for every one-way quantum protocol for $\DISJ_n^{(k)}$ that communicates $c\leq kn/\eta$ qubits,
its probability to compute at least a $(1-\eps)$-fraction of the $k$ instances correctly
is at most $2^{-\Omega(k)}$.
\end{theorem}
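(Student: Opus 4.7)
The plan is to reduce to the zero-error SDPT, Theorem~\ref{thdptdisjoneway}, via a simple random-correction trick (exactly the informal argument given in the paragraph preceding the theorem statement). Let $\Pi$ be a one-way quantum protocol for $\DISJ_n^{(k)}$ that uses $c \le kn/\eta$ qubits and outputs a string in $\01^k$ that agrees with the correct answer on at least a $(1-\eps)$-fraction of the coordinates with probability at least $p$. I will convert $\Pi$ into a zero-error protocol $\Pi'$ and then apply the zero-error bound.

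First I define $\Pi'$ as follows: Bob runs $\Pi$ to obtain an output $z\in\01^k$, then samples a uniformly random subset $E\subseteq[k]$ of size $\lfloor \eps k\rfloor$ and flips the coordinates of $z$ indexed by $E$. The communication cost of $\Pi'$ is identical to that of $\Pi$. Now I lower bound the probability that $\Pi'$ computes all $k$ instances correctly. With probability at least $p$ the output of $\Pi$ differs from the correct answer in a set of size at most $\eps k$; conditioned on this event, there is at least one specific set $E^\star$ of size at most $\lfloor \eps k\rfloor$ such that flipping the coordinates in $E^\star$ yields the correct $k$-bit answer, and the probability that the random $E$ equals $E^\star$ is at least $1/\sum_{i=0}^{\lfloor \eps k\rfloor}\binom{k}{i} \ge 2^{-kH(\eps)}$, using the standard entropy bound on partial binomial sums. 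Hence $\Pi'$ computes all $k$ instances correctly with probability at least $p\cdot 2^{-kH(\eps)}$.

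Next I invoke Theorem~\ref{thdptdisjoneway}: for the given $\eta>2\ln 2$ there is a constant $\gamma=\gamma(\eta)>0$ such that any one-way quantum protocol for $\DISJ_n^{(k)}$ with at most $kn/\eta$ qubits succeeds on all $k$ instances with probability at most $2^{-\gamma k}$. Applying this to $\Pi'$ gives $p\cdot 2^{-kH(\eps)} \le 2^{-\gamma k}$, i.e.\ $p \le 2^{-(\gamma-H(\eps))k}$. Since $H(\eps)\to 0$ as $\eps\to 0$, I choose $\eps=\eps(\eta)>0$ small enough that $H(\eps)\le \gamma/2$; this yields $p\le 2^{-\gamma k/2}=2^{-\Omega(k)}$, as required.

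There is really no substantive obstacle here beyond choosing the parameters: the only thing to verify is that $\gamma$ from Theorem~\ref{thdptdisjoneway} is genuinely a positive constant depending only on $\eta$ (so that $\eps$ can be picked independently of $n,k$), which is immediate from that theorem. No new quantum-information or matrix-analytic input is needed beyond the already-established zero-error direct product theorem; the entire proof is a one-paragraph probabilistic reduction followed by a parameter choice.
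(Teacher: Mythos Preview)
Your approach is exactly the paper's: reduce to the zero-error SDPT (Theorem~\ref{thdptdisjoneway}) by randomly guessing and flipping an error pattern, then use the entropy bound $\sum_{i\le\eps k}\binom{k}{i}\le 2^{kH(\eps)}$ and choose $\eps$ small enough that $H(\eps)<\gamma$.

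One small slip to fix: you sample $E$ of size \emph{exactly} $\lfloor\eps k\rfloor$, but the actual error set $E^\star$ may be strictly smaller, in which case $E$ can never equal $E^\star$ and the claimed lower bound on $\Pr[E=E^\star]$ fails. The paper instead guesses ``up to $\eps k$ positions,'' i.e.\ samples $E$ uniformly from all subsets of size at most $\lfloor\eps k\rfloor$; then $\Pr[E=E^\star]=\big(\sum_{i\le\lfloor\eps k\rfloor}\binom{k}{i}\big)^{-1}\ge 2^{-kH(\eps)}$ holds as stated, and the rest of your argument goes through verbatim.
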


\section{Lower bounds on locally decodable codes}\label{secldc}

When analyzing locally decodable codes, it will be convenient to view bits
as elements of $\{\pm 1\}$ instead of $\01$.
Formally, a locally decodable code is defined as follows.

\begin{definition}\label{defldc}
$C:\{\pm 1\}^n\rightarrow\{\pm 1\}^N$ is a \emph{$(q,\delta,\eps)$-locally decodable code}
(LDC) if there is a randomized decoding algorithm $A$ such that
\begin{enumerate}
\item For all $x\in\{\pm 1\}^n$, $i\in[n]$, and $y\in\{\pm 1\}^N$ with Hamming distance
$d(C(x),y)\leq\delta N$, we have $\Pr[A^y(i)=x_i]\geq 1/2+\eps$. Here $A^y(i)$ is the
random variable that is $A$'s output given input $i$ and oracle $y$.
\item $A$ makes at most $q$ queries to $y$, non-adaptively.
\end{enumerate}
\end{definition}

In Appendix~\ref{appspecialldc} we show that such a code implies the following:
For each $i\in[n]$, there is a set $M_i$ of at least $\delta\eps N/q^2$ disjoint tuples, each of at most $q$ elements from $[N]$,
and a sign $a_{i,Q}\in\{\pm 1\}$ for each $Q\in M_i$, such that
$$
\Exp_x[a_{i,Q} x_i\prod_{j\in Q}C(x)_j]\geq\frac{\eps}{2^q},
$$
where the expectation is uniformly over all $x\in\{\pm 1\}^n$.
In other words, the parity of each of the tuples in $M_i$ allows us to predict $x_i$ with non-trivial bias (averaged over all $x$).

Kerenidis and de Wolf~\cite{kerenidis&wolf:qldcj} used quantum information theory to show the
lower bound $N=2^{\Omega(\delta\eps^2 n})$ on the length of 2-query LDCs.
Using the new hypercontractive inequality, we can prove a similar
lower bound. Our dependence on $\eps$ and $\delta$ is slightly worse,
but can probably be improved by a more careful analysis.

\begin{theorem}
If $C:\{\pm 1\}^n\rightarrow\{\pm 1\}^N$ is a $(2,\delta,\eps)$-LDC, then $N=2^{\Omega(\delta^2 \eps^4 n)}$.
\end{theorem}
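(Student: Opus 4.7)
The plan is to convert the LDC into a matrix-valued encoding that satisfies the hypotheses of the XOR-QRAC machinery, and then apply Theorem~\ref{xor-bias-thm} in the single-bit case $k=1$.

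First I would invoke the structural statement from Appendix~\ref{appspecialldc}: assuming without loss of generality that the decoder always makes exactly two queries, we obtain for each $i\in[n]$ a set $M_i$ of at least $\delta\eps N/4$ pairwise disjoint pairs $Q=\{j,k\}\subseteq[N]$, together with signs $a_{i,Q}\in\{\pm 1\}$, such that $\Exp_x[a_{i,Q}\, x_i\, C(x)_j C(x)_k]\geq \eps/4$ for every $Q\in M_i$. Then, following the Kerenidis--de Wolf idea, I would encode each input $x\in\{\pm 1\}^n$ by the $m$-qubit pure state
$$\ket{\phi_x} = \frac{1}{\sqrt{N}} \sum_{j=1}^N C(x)_j\, \ket{j},$$
with $m=\lceil\log N\rceil$, and set $f(x) = \ketbra{\phi_x}{\phi_x}$.

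For each $i\in[n]$, I would construct a two-outcome measurement obtained by refining the computational basis of $\C^N$: for each pair $Q=\{j,k\}\in M_i$, replace $\ket{j},\ket{k}$ by $(\ket{j}+ a_{i,Q}\ket{k})/\sqrt{2}$ (assigned output $+1$) and $(\ket{j}-a_{i,Q}\ket{k})/\sqrt{2}$ (assigned output $-1$); on the basis vectors untouched by $M_i$, output a uniformly random sign. A direct calculation using the disjointness of the pairs and $C(x)_j^2=1$ gives
$$\Exp_x[\mathrm{output}\cdot x_i] \;=\; \sum_{Q=\{j,k\}\in M_i} \frac{2 a_{i,Q}}{N}\, \Exp_x[x_i C(x)_j C(x)_k] \;\geq\; \frac{2|M_i|}{N}\cdot\frac{\eps}{4} \;\geq\; \frac{\delta\eps^2}{8}.$$
By Fact~\ref{trnorm-bias-fact} the optimal bias for predicting $x_i$ from $f(x)$ equals $\trnormb{\widehat{f}(\{i\})}$, so $\trnormb{\widehat{f}(\{i\})}\geq\delta\eps^2/8$ for every $i$; averaging over $i$ certifies that $f$ is a $(1,n,m,\delta\eps^2/8)$-XOR-QRAC.

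Finally, Theorem~\ref{xor-bias-thm} applied with $k=1$ and any fixed $\eta>2\ln 2$ yields $\delta\eps^2/8 \leq C_\eta\sqrt{\eta m/n}$, hence $m = \Omega(\delta^2\eps^4 n)$ and $N \geq 2^{\Omega(\delta^2\eps^4 n)}$. The main obstacles I anticipate are essentially bookkeeping: reconciling the $\{\pm 1\}$-valued LDC convention with the $\{0,1\}$-valued XOR-QRAC definition via $x_i=(-1)^{b_i}$, and verifying that the explicit measurement bias computed above is indeed lower bounded by $\trnormb{\widehat{f}(\{i\})}$ through Fact~\ref{trnorm-bias-fact} applied to the conditional states $\Exp[f(x)\mid x_i=\pm 1]$. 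All of the conceptual weight is carried by the matrix-valued hypercontractive inequality, which enters through Theorem~\ref{xor-bias-thm}.
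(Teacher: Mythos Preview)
Your argument is correct, but it takes a different route from the paper's own proof. You revive the quantum encoding $\ket{\phi_x}$ and the explicit two-outcome measurement from Kerenidis--de Wolf to lower bound $\trnormb{\widehat f(\{i\})}$, and then feed this into the XOR-QRAC bound (Theorem~\ref{xor-bias-thm}) at $k=1$. The paper, by contrast, deliberately avoids any quantum language: it works with the unnormalized rank-one matrix $f(x)_{jk}=C(x)_jC(x)_k$ (which is $N$ times your density matrix), lower bounds $\normb{\widehat f(\{i\})}_p$ purely matrix-analytically by permuting the pairs of $M_i$ onto $2\times 2$ diagonal blocks and invoking the pinching inequality $\norm{{\rm diag}(A)}\le\norm{A}$ (Lemma~\ref{lemnorms}), and then applies the hypercontractive inequality (Theorem~\ref{thm:main-thm}) directly with $p=1+1/\log N$.

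The two arguments are morally the same---the paper even remarks that its $f(x)$ is the density matrix of $\ket{\phi_x}$ up to scaling and that its norm bound plays the role of the measurement step. What the paper's version buys is precisely the point of the section: a proof with no mention of states, measurements, or Fact~\ref{trnorm-bias-fact}, answering Trevisan's question about a ``non-quantum'' proof. What your version buys is economy: the optimization over $p$ is already packaged inside Theorem~\ref{xor-bias-thm}, and the measurement calculation is arguably more transparent than the permutation-plus-pinching step. Both leave the same singleton-versus-pair technicality from $M_i$ unaddressed, as the paper does explicitly.
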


\begin{proof}
Define $f(x)$ as the $N\times N$ matrix whose $(i,j)$-entry is $C(x)_iC(x)_j$.
Since $f(x)$ has rank~1 and its $N^2$ entries are all $+1$ or $-1$, its only non-zero singular value is $N$.
Hence $\norm{f(x)}_p^p=N^{p-1}$ for every $x$.

Consider the $N\times N$ matrices $\widehat{f}(\{i\})$ that are the Fourier transform of $f$ at the singleton sets $\{i\}$:
$$
\widehat{f}(\{i\})=\frac{1}{2^n}\sum_{x\in\{\pm 1\}^n}f(x)x_i.
$$
We want to lower bound $\normb{\widehat{f}(\{i\})}_p$.

With the above notation, each set $M_i$ consists of at least $\delta\eps N/4$ disjoint pairs of indices.\footnote{Actually some of the elements of $M_i$ 
may be singletons. Dealing with this is a technicality that we will ignore here in order to simplify the presentation.}
For simplicity assume $M_i=\{(1,2),(3,4),(5,6),\ldots\}$.
The $2\times 2$ submatrix in the upper left corner of $f(x)$ is
$$
\left(\begin{array}{cc}
1 & C(x)_1C(x)_2\\
C(x)_1C(x)_2 & 1\end{array}\right).
$$
Since $(1,2)\in M_i$, we have $\Exp_x[C(x)_1C(x)_2 x_i a_{i,(1,2)}]\in [\eps/4,1]$.
Hence the $2\times 2$ submatrix in the upper left corner of $\widehat{f}(\{i\})$ is
$$
\left(\begin{array}{cc}
0 & a\\
a & 0\end{array}\right)
$$
for some $a$ with $|a| \in[\eps/4,1]$.
The same is true for each of the first $\delta\eps N/4$ $2\times 2$ diagonal blocks of $\widehat{f}(\{i\})$
(each such $2\times 2$ block corresponds to a pair in $M_i$). Let $P$ be the $N\times N$
permutation matrix that swaps rows 1 and~2, swaps rows 3 and~4, etc.
Then the first $\delta\eps N/2$ diagonal entries of $F_i=P\widehat{f}(\{i\})$ all have absolute value in $[\eps/4,1]$. 

The $\norm{\cdot}_p$ norm is \emph{unitarily invariant}:
$\norm{UAV}_p=\norm{A}_p$ for every matrix $A$ and unitaries $U,V$.
Note the following lemma, which is a special case of~\cite[Eq.~(IV.52) on p.~97]{Bhatia:97a}.
We include its proof for completeness.

\begin{lemma}\label{lemnorms}
Let $\norm{\cdot}$ be a unitarily-invariant norm on the set of $d\times d$ complex matrices.
If $A$ is a matrix and ${\rm diag}(A)$ is the matrix obtained from $A$ by setting its off-diagonal entries to 0,
then $\norm{{\rm diag }(A)}\leq\norm{A}$.
\end{lemma}

\begin{proof}
We will step-by-step set the off-diagonal entries of $A$ to 0, without increasing its norm.
We start with the off-diagonal entries in the $d$th row and column.
Let $D_d$ be the diagonal matrix that has $D_{d,d}=-1$ and $D_{i,i}=1$ for $i<d$.
Note that $D_d A D_d$ is the same as $A$, except that the off-diagonal entries of the $d$th row and column
are multiplied by $-1$.
Hence $A'=(A+D_dAD_d)/2$ is the matrix obtained from $A$ by setting those entries to~0
(this doesn't affect the diagonal).
Since $D_d$ is unitary and every norm satisfies the triangle inequality, we have
$$
\norm{A'}=\norm{(A+D_dAD_d)/2}\leq\frac{1}{2}(\norm{A}+\norm{D_dAD_d})=\norm{A}.
$$
In the second step, we can set the off-diagonal entries in the $(d-1)$st row and column of $A'$ to 0,
using the diagonal matrix $D_{d-1}$ which has a $-1$ only on its $(d-1)$st position. Continuing in this manner,
we set all off-diagonal entries of $A$ to zero without affecting its diagonal,
and without increasing its norm.
\end{proof}

Using this lemma, we obtain
$$
\normb{\widehat{f}(\{i\})}_p=\norm{F_i}_p\geq\norm{{\rm diag}(F_i)}_p\geq \left(\frac{1}{N}(\delta\eps N/2)(\eps/4)^p\right)^{1/p}=(\delta\eps/2)^{1/p}\eps/4.
$$
Using the hypercontractive inequality (Theorem~\ref{thm:main-thm}), we have for any $p\in[1,2]$
$$
n(p-1)(\delta\eps/2)^{2/p}(\eps/4)^2\leq \sum_{i=1}^n (p-1)\normb{\widehat{f}(\{i\})}_p^2 \leq \left(\frac{1}{2^n}\sum_x \norm{f(x)}_p^p\right)^{2/p}=N^{2(p-1)/p}.
$$
Choosing $p=1+1/\log N$ and rearranging implies the result.
\end{proof}

Let us elaborate on the similarities and differences between this proof and the quantum proof of~\cite{kerenidis&wolf:qldcj}.
On the one hand, the present proof makes no use of quantum information theory. It only uses the well known version
of LDCs mentioned after Definition~\ref{defldc}, some basic matrix analysis,
and our hypercontractive inequality for matrix-valued functions.
On the other hand, the proof may still be viewed as a translation of the original quantum proof to a different language.
The quantum proof defines, for each $x$, a $\log(N)$-qubit state $\ket{\phi(x)}$ which is the uniform superposition over the $N$
indices of the codeword $C(x)$.  It then proceeds in two steps: (1) by viewing the elements of $M_i$ as 2-dimensional projectors
in a quantum measurement of $\ket{\phi(x)}$, we can with good probability
recover the parity $C(x)_jC(x)_k$ for a random element $(j,k)$ of the matching $M_i$.
Since that parity has non-trivial correlation with $x_i$, the states $\ket{\phi(x)}$ form a quantum random access code:
they allow us to recover each $x_i$ with decent probability (averaged over all~$x$);
(2) the quantum proof then invokes Nayak's linear lower bound on the number of qubits of a random access code to conclude
$\log N=\Omega(n)$.
The present proof mimics this quantum proof quite closely: the matrix $f(x)$ is, up to normalization, the density matrix corresponding to the
state $\ket{\phi(x)}$; the fact that matrix $\widehat{f}(\{i\})$ has fairly high norm corresponds to the fact that
the parity produced by the quantum measurement has fairly good correlation with~$x_i$; and finally, our invocation
of Theorem~\ref{thm:main-thm} replaces (but is not identical to) the linear lower bound on quantum random access codes.
We feel that by avoiding any explicit use of quantum information theory,
the new proof holds some promise for potential extensions to codes with $q\geq 3$.

\subsubsection*{Acknowledgments}
This work started while the second author was visiting the group in CWI Amsterdam,
and he would like to thank them for their hospitality.
Part of this work was done while the authors were visiting the Institut Henri Poincar{\'e} in Paris,
as part of the program ``Quantum information, computation and complexity'',
and we would like to thank the organizers for their efforts.  We thank Shiri Artstein, Julia Kempe,
Hartmut Klauck, Robert K\"{o}nig, Assaf Naor, Ashwin Nayak, Ryan O'Donnell, Renato Renner,
Alex Samorodnitsky, Falk Unger, Emanuele Viola, and Avi Wigderson for useful discussions and comments.
Thanks to Troy Lee for a preliminary version of~\cite{lss:quantumnof}.

\bibliographystyle{plain}

\begin{thebibliography}{10}

\bibitem{aaronson&ambainis:search}
S.~Aaronson and A.~Ambainis.
\newblock Quantum search of spatial regions.
\newblock In {\em Proceedings of 44th IEEE FOCS}, pages 200--209, 2003.
\newblock quant-ph/0303041.

\bibitem{antv:dense}
A.~Ambainis, A.~Nayak, A.~{Ta-Shma}, and U.~Vazirani.
\newblock Quantum dense coding and a lower bound for 1-way quantum finite
  automata.
\newblock In {\em Proceedings of 31st ACM STOC}, pages 376--383, 1999.
\newblock quant-ph/9804043.

\bibitem{bhk:missingbit}
L.~Babai, T.~P. Hayes, and P.~G. Kimmel.
\newblock The cost of the missing bit: Communication complexity with help.
\newblock {\em Combinatorica}, 21(4):455--488, 2001.
\newblock Earlier version in STOC'98.

\bibitem{bcl:tracenorms}
K.~Ball, E.~Carlen, and E.~Lieb.
\newblock Sharp uniform convexity and smoothness inequalities for trace norms.
\newblock {\em Inventiones Mathematicae}, 115:463--482, 1994.

\bibitem{bpsw:corruptionj}
P.~Beame, T.~Pitassi, N.~Segerlind, and A.~Wigderson.
\newblock A strong direct product theorem for corruption and the multiparty
  communication complexity of set disjointness.
\newblock {\em Computational Complexity}, 15(4):391--432, 2006.
\newblock Earlier version in Complexity'05.

\bibitem{Bec75}
W.~Beckner.
\newblock Inequalities in {F}ourier analysis.
\newblock {\em Annals of Mathematics}, 102:159--182, 1975.

\bibitem{bikr:improvedpir}
A.~Beimel, Y.~Ishai, E.~Kushilevitz, and J.~Raymond.
\newblock Breaking the {$O(n^{1/(2k-1)})$} barrier for information-theoretic
  {P}rivate {I}nformation {R}etrieval.
\newblock In {\em Proceedings of 43rd IEEE FOCS}, pages 261--270, 2002.

\bibitem{Bhatia:97a}
R.~Bhatia.
\newblock {\em Matrix Analysis}.
\newblock Number 169 in {Graduate Texts in Mathematics}. Springer-Verlag, {New
  York}, 1997.

\bibitem{Bobkov97}
S.~G. Bobkov.
\newblock An isoperimetric inequality on the discrete cube, and an elementary
  proof of the isoperimetric inequality in {G}auss space.
\newblock {\em Annals of Probability}, 25(1):206--214, 1997.

\bibitem{Bon70}
A.~Bonami.
\newblock {\'E}tude des coefficients de {F}ourier des fonctions de {$L^p(G)$}.
\newblock {\em Annales de l'Institut Fourier}, 20(2):335--402, 1970.

\bibitem{Borell79}
C.~Borell.
\newblock On the integrability of {B}anach space valued {W}alsh polynomials.
\newblock In {\em S\'eminaire de Probabilit\'es, XIII (Univ.~Strasbourg,
  1977/78)}, volume 721 of {\em Lecture Notes in Math.}, pages 1--3. Springer,
  Berlin, 1979.

\bibitem{BuhrmanCleveWigderson98}
H.~Buhrman, R.~Cleve, and A.~Wigderson.
\newblock Quantum vs.~classical communication and computation.
\newblock In {\em Proceedings of 30th ACM STOC}, pages 63--68, 1998.
\newblock quant-ph/9802040.

\bibitem{buhrman&wolf:qcclower}
H.~Buhrman and R.~{de} Wolf.
\newblock Communication complexity lower bounds by polynomials.
\newblock In {\em Proceedings of 16th IEEE Conference on Computational
  Complexity}, pages 120--130, 2001.
\newblock cs.CC/9910010.

\bibitem{CarlenLieb}
E.~A. Carlen and E.~H. Lieb.
\newblock Optimal hypercontractivity for {F}ermi fields and related
  noncommutative integration inequalities.
\newblock {\em Communications in Mathematical Physics}, 155(1):27--46, 1993.

\bibitem{chattopadhyay&ada:disj}
A.~Chattopadhyay and A.~Ada.
\newblock Multiparty communication complexity of disjointness.
\newblock Technical report, ECCC TR--08--002, 2008.
\newblock Available at http://www.eccc.uni-trier.de/eccc/.

\bibitem{cgks:pir}
B.~Chor, O.~Goldreich, E.~Kushilevitz, and M.~Sudan.
\newblock Private information retrieval.
\newblock {\em Journal of the ACM}, 45(6):965--981, 1998.
\newblock Earlier version in FOCS'95.

\bibitem{fehr&schaffner:extraction}
S.~Fehr and C.~Schaffner.
\newblock Randomness extraction via delta-biased masking in the presence of a
  quantum attacker.
\newblock In {\em Proceedings of Theory of Cryptography (TCC)}, pages 465--481,
  2008.

\bibitem{gkkrw:1way}
D.~Gavinsky, J.~Kempe, I.~Kerenidis, R.~Raz, and R.~{de} Wolf.
\newblock Exponential separations for one-way quantum communication complexity,
  with applications to cryptography.
\newblock In {\em Proceedings of 39th ACM STOC}, pages 516--525, 2007.
\newblock quant-ph/0611209.

\bibitem{gkst:lowerpir}
O.~Goldreich, H.~Karloff, L.~Schulman, and L.~Trevisan.
\newblock Lower bounds for linear locally decodable codes and private
  information retrieval.
\newblock {\em Computational Complexity}, 15(3):263--296, 2006.
\newblock Earlier version in Complexity'02. Also on ECCC.

\bibitem{Grolmusz94}
Vince Grolmusz.
\newblock The {BNS} lower bound for multi-party protocols is nearly optimal.
\newblock {\em Inform. and Comput.}, 112(1):51--54, 1994.

\bibitem{Gross75}
L.~Gross.
\newblock Logarithmic {S}obolev inequalities.
\newblock {\em American Journal of Mathematics}, 97(4):1061--1083, 1975.

\bibitem{HardyLP52}
G.~H. Hardy, J.~E. Littlewood, and G.~P{\'o}lya.
\newblock {\em Inequalities}.
\newblock Cambridge Mathematical Library. Cambridge University Press,
  Cambridge, 1988.
\newblock Reprint of the 1952 edition.

\bibitem{hastad:optimalj}
J.~H{\aa}stad.
\newblock Some optimal inapproximability results.
\newblock {\em Journal of the ACM}, 48(4):798--859, 2001.
\newblock Earlier version in STOC'97.

\bibitem{HavivR08}
I.~Haviv and O.~Regev.
\newblock On tensor norms and locally decodable codes, 2008.
\newblock In progress.

\bibitem{holevo}
A.~S. Holevo.
\newblock Bounds for the quantity of information transmitted by a quantum
  communication channel.
\newblock {\em Problemy Peredachi Informatsii}, 9(3):3--11, 1973.
\newblock English translation in {\it Problems of Information Transmission},
  9:177--183, 1973.

\bibitem{hoyer&wolf:disjeq}
P.~H{\o}yer and R.~{de} Wolf.
\newblock Improved quantum communication complexity bounds for disjointness and
  equality.
\newblock In {\em Proceedings of 19th Annual Symposium on Theoretical Aspects
  of Computer Science (STACS'2002)}, volume 2285 of {\em Lecture Notes in
  Computer Science}, pages 299--310. Springer, 2002.
\newblock quant-ph/0109068.

\bibitem{jackson:dnf}
J.~Jackson.
\newblock An efficient membership-query algorithm for learning {DNF} with
  respect to the uniform distribution.
\newblock {\em Journal of Computer and System Sciences}, 55(3):414--440, 1997.
\newblock Earlier version in FOCS'94.

\bibitem{jkn:subdistribution}
R.~Jain, H.~Klauck, and A.~Nayak.
\newblock Direct product theorems for classical communication complexity via
  subdistribution bounds.
\newblock In {\em Proceedings of 40th ACM STOC}, pages 599--608, 2008.

\bibitem{jukna:excom}
S.~Jukna.
\newblock {\em Extremal Combinatorics}.
\newblock EATCS Series. Springer, 2001.

\bibitem{kkl:influence}
J.~Kahn, G.~Kalai, and N.~Linial.
\newblock The influence of variables on {B}oolean functions.
\newblock In {\em Proceedings of 29th IEEE FOCS}, pages 68--80, 1988.

\bibitem{kalai&safra:threshold}
G.~Kalai and S.~Safra.
\newblock Threshold phenomena and influence.
\newblock In A.G. Percus, G.~Istrate, and C.~Moore, editors, {\em Computational
  Complexity and Statistical Physics}, pages 25--60. Oxford University Press,
  2006.

\bibitem{katz&trevisan:ldc}
J.~Katz and L.~Trevisan.
\newblock On the efficiency of local decoding procedures for error-correcting
  codes.
\newblock In {\em Proceedings of 32nd ACM STOC}, pages 80--86, 2000.

\bibitem{kerenidis&wolf:qldcj}
I.~Kerenidis and R.~{de} Wolf.
\newblock Exponential lower bound for 2-query locally decodable codes via a
  quantum argument.
\newblock {\em Journal of Computer and System Sciences}, 69(3):395--420, 2004.
\newblock Special issue on STOC'03. quant-ph/0208062.

\bibitem{klauck:qcclower}
H.~Klauck.
\newblock Lower bounds for quantum communication complexity.
\newblock In {\em Proceedings of 42nd IEEE FOCS}, pages 288--297, 2001.
\newblock quant-ph/0106160.

\bibitem{ksw:dpt}
H.~Klauck, R.~{\v{S}}palek, and R.~{de} Wolf.
\newblock Quantum and classical strong direct product theorems and optimal
  time-space tradeoffs.
\newblock In {\em Proceedings of 45th IEEE FOCS}, pages 12--21, 2004.
\newblock quant-ph/0402123.

\bibitem{koenig&renner:samplingminentropy}
R.~K{\"{o}}nig and R.~Renner.
\newblock Sampling of min-entropy relative to quantum knowledge, 28 Dec 2007.
\newblock quant-ph/0712.4291.

\bibitem{kushilevitz&nisan:cc}
E.~Kushilevitz and N.~Nisan.
\newblock {\em Communication Complexity}.
\newblock Cambridge University Press, 1997.

\bibitem{LeeNaor04}
J.~R. Lee and A.~Naor.
\newblock Embedding the diamond graph in {$L\sb p$} and dimension reduction in
  {$L\sb 1$}.
\newblock {\em Geometric and Functional Analysis}, 14(4):745--747, 2004.

\bibitem{lss:quantumnof}
T.~Lee, G.~Schechtman, and A.~Shraibman.
\newblock Lower bounds on quantum multiparty communication complexity, 2008.
\newblock Unpublished manuscript.

\bibitem{lee&shraibman:nofdisj}
T.~Lee and A.~Shraibman.
\newblock Disjointness is hard in the multi-party number-on-the-forehead model.
\newblock In {\em Proceedings of 23rd IEEE Conference on Computational
  Complexity}, pages 81--91, 2008.
\newblock arXiv:0712.4279.

\bibitem{lmn:learnability}
N.~Linial, Y.~Mansour, and N.~Nisan.
\newblock Constant depth circuits, {F}ourier transform, and learnability.
\newblock {\em Journal of the ACM}, 40(3):607--620, 1993.
\newblock Earlier version in FOCS'89.

\bibitem{mansour:dnfuniform}
Y.~Mansour.
\newblock An {$O(n^{\log\log n})$} learning algorithm for {DNF} under the
  uniform distribution.
\newblock {\em Journal of Computer and System Sciences}, 50(3):543--550, 1995.
\newblock Earlier version in COLT'92.

\bibitem{moo:noisestability}
E.~Mossel, R.~{O'Donnell}, and K.~Oleszkiewicz.
\newblock Noise stability of functions with low influences: invariance and
  optimality.
\newblock {\em Annals of Mathematics}, 2008.
\newblock To appear. Earlier version in FOCS'05.

\bibitem{mos:learningjuntas}
E.~Mossel, R.~{O'Donnell}, and R.~Servedio.
\newblock Learning functions of {$k$} relevant variables.
\newblock {\em Journal of Computer and System Sciences}, 69(3):421--434, 2004.
\newblock Earlier version in STOC'03.

\bibitem{nayak:qfa}
A.~Nayak.
\newblock Optimal lower bounds for quantum automata and random access codes.
\newblock In {\em Proceedings of 40th IEEE FOCS}, pages 369--376, 1999.
\newblock quant-ph/9904093.

\bibitem{nayak:quantumwalk}
A.~Nayak and A.~Vishwanath.
\newblock Quantum walk on the line.
\newblock quant-ph/0010117, Oct 2000.

\bibitem{nielsen&chuang:qc}
M.~A. Nielsen and I.~L. Chuang.
\newblock {\em Quantum Computation and Quantum Information}.
\newblock Cambridge University Press, 2000.

\bibitem{odonnell:thesis}
R.~{O'Donnell}.
\newblock {\em Computational applications of noise sensitivity}.
\newblock PhD thesis, MIT, 2003.

\bibitem{odonnell:notes}
R.~{O'Donnell}.
\newblock Lecture notes for a course ``{A}nalysis of {B}oolean functions'',
  2007.
\newblock Available at http://www.cs.cmu.edu/\~{ }odonnell/boolean-analysis/.

\bibitem{odonnell:survey}
R.~{O'Donnell}.
\newblock Some topics in analysis of boolean functions.
\newblock Technical report, ECCC Report TR08--055, 2008.
\newblock Paper for an invited talk at STOC'08.

\bibitem{raz:ccfourier}
R.~Raz.
\newblock Fourier analysis for probabilistic communication complexity.
\newblock {\em Computational Complexity}, 5(3/4):205--221, 1995.

\bibitem{razborov:qdisj}
A.~Razborov.
\newblock Quantum communication complexity of symmetric predicates.
\newblock {\em Izvestiya of the Russian Academy of Sciences, mathematics},
  67(1):159--176, 2003.
\newblock quant-ph/0204025.

\bibitem{samorodnitsky:ldccomm}
A.~Samorodnitsky.
\newblock Personal communication with {O.~Regev}, March 2008.

\bibitem{Tomczak74}
N.~Tomczak-Jaegermann.
\newblock The moduli of smoothness and convexity and the {R}ademacher averages
  of trace classes {$S\sb{p}(1\leq p<\infty )$}.
\newblock {\em Studia Mathematica}, 50:163--182, 1974.

\bibitem{trevisan:eccsurvey}
L.~Trevisan.
\newblock Some applications of coding theory in computational complexity.
\newblock {\em Quaderni di Matematica}, 13:347--424, 2004.

\bibitem{viola&wigderson:multipartypj}
E.~Viola and A.~Wigderson.
\newblock One-way multi-party communication lower bound for pointer jumping
  with applications.
\newblock In {\em Proceedings of 48th IEEE FOCS}, pages 427--437, 2007.

\bibitem{wolf:qccsurvey}
R.~{de} Wolf.
\newblock Quantum communication and complexity.
\newblock {\em Theoretical Computer Science}, 287(1):337--353, 2002.

\bibitem{woodruff:ldclower}
D.~Woodruff.
\newblock New lower bounds for general locally decodable codes.
\newblock Technical report, ECCC Report TR07--006, 2006.

\bibitem{yekhanin:3ldc}
S.~Yekhanin.
\newblock Towards 3-query locally decodable codes of subexponential length.
\newblock In {\em Proceedings of 39th ACM STOC}, pages 266--274, 2007.

\end{thebibliography}

\appendix

\section{3-party NOF communication complexity of Disjointness}\label{app3partynof}

Some of the most interesting open problems in communication complexity arise
in the ``number on the forehead'' (NOF) model of multiparty communication complexity,
with applications ranging from bounds on proof systems to circuit lower bounds.
Here, there are $\ell$ players and $\ell$ inputs $x_1,\ldots,x_\ell$.
The players want to compute some function $f(x_1,\ldots,x_\ell)$.
Each player~$j$ sees all inputs \emph{except} $x_j$.
In the $\ell$-party version of the Disjointness problem, the $\ell$ players want to figure out
whether there is an index $i\in[n]$ where all $\ell$ input strings have a 1.
For any constant $\ell$, the best known upper bound is linear in $n$~\cite{Grolmusz94}.

While the case $\ell=2$ has been well-understood for a long time,
the first polynomial lower bounds for $\ell\geq 3$ were shown only very recently.
Lee and Shraibman~\cite{lee&shraibman:nofdisj}, and independently Chattopadhyay and Ada~\cite{chattopadhyay&ada:disj},
showed lower bounds of the form $\Omega(n^{1/(\ell+1)})$ on the classical communication complexity for constant $\ell$.
This becomes $\Omega(n^{1/4})$ for $\ell=3$ players.

Stronger lower bounds can be shown if we limit the kind of interaction allowed between the players.
Viola and Wigderson~\cite{viola&wigderson:multipartypj} showed a lower bound of $\Omega(n^{1/(\ell-1)})$
for the \emph{one-way} complexity of $\ell$-player Disjointness, for any constant $\ell$.
In particular, this gives $\Omega(\sqrt{n})$ for $\ell=3$.%
\footnote{Actually, this bound for the case $\ell=3$ was already known earlier; see \cite{bhk:missingbit}.}
An intermediate model was studied by Beame et al.~\cite{bpsw:corruptionj}, namely
protocols where Charlie first sends a message to Bob, and then Alice and Bob are allowed
two-way communication between each other to compute $\DISJ_n(x_1,x_2,x_3)$.
This model is weaker than full interaction, but stronger than the one-way model.
Beame et al.\ showed (using a direct product theorem)
that any protocol of this form requires $\Omega(n^{1/3})$ bits of communication.%
\footnote{Their conference paper had an $\Omega(n^{1/3}/\log n)$ bound, but the journal
version~\cite{bpsw:corruptionj} managed to get rid of the $\log n$.}

Here we strengthen these two 3-player results to \emph{quantum} communication complexity, while at the same
time slightly simplifying the proofs.
These results will follow easily from two direct product theorems: the one for two-way communication
from~\cite{ksw:dpt}, and the new one for one-way communication that we prove here.
Lee, Schechtman, and Shraibman~\cite{lss:quantumnof} have recently extended their $\Omega(n^{1/(\ell+1)})$ 
classical lower bound to $\ell$-player quantum protocols.  While that result holds for a stronger communication model than ours
(arbitrary point-to-point quantum messages), their bound for $\ell=3$ is weaker than ours ($\Omega(n^{1/4})$ vs $\Omega(n^{1/3})$).

\subsection{Communication-type $C\to(B\leftrightarrow A)$}

Consider 3-party Disjointness on inputs $x,y,z\in\01^n$.
Here Alice sees $x$ and $z$, Bob sees $y$ and $z$,
and Charlie sees $x$ and $y$. Their goal is to decide if there is
an $i\in[n]$ such that $x_i=y_i=z_i=1$.

Suppose we have a 3-party protocol $P$ for Disjointness with the following ``flow'' of communication.
Charlie sends a message of $c_1$ classical bits to Alice and Bob (or just to Bob, it doesn't really matter),
who then exchange $c_2$ \emph{qu}bits and compute Disjointness with bounded error probability.
Our lower bound approach is similar to the one of Beame et~al.~\cite{bpsw:corruptionj},
the main change being our use of stronger direct product theorems.
Combining the (0-error) two-way quantum strong direct product theorem for Disjointness from~\cite{ksw:dpt}
with the argument from the end of our Section~\ref{secdirectproddisj}, we have the following $\eps$-error
strong direct product theorem for $k$ instances of 2-party Disjointness:

\begin{theorem}\label{thepsdptdisjtwoway}
There exist constants $\eps>0$ and $\alpha>0$ such that the following holds:
for every two-way quantum protocol for $\DISJ_n^{(k)}$ that communicates at most $\alpha k\sqrt{n}$ qubits,
its probability to compute at least an $(1-\eps)$-fraction of the $k$ instances correctly,
is at most $2^{-\Omega(k)}$.
\end{theorem}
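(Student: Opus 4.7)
The plan is to mimic the derivation of Theorem~\ref{thepsdptdisjoneway} from Theorem~\ref{thdptdisjoneway}, only now starting from the \emph{zero-error} two-way strong direct product theorem of Klauck, \v{S}palek, and de Wolf~\cite{ksw:dpt}. That result provides constants $\alpha_0>0$ and $\gamma>0$ such that every two-way quantum protocol for $\DISJ_n^{(k)}$ using at most $\alpha_0 k\sqrt{n}$ qubits computes \emph{all} $k$ instances correctly with probability at most $2^{-\gamma k}$.

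Given such a zero-error SDPT, the $\eps$-error version is obtained by the same classical post-processing argument used at the end of Section~\ref{secdirectproddisj}. Concretely, suppose $P$ is a two-way quantum protocol using $c\le \alpha k\sqrt{n}$ qubits whose probability of outputting at least $(1-\eps)k$ correct answers is $p$. I would convert $P$ into a zero-error protocol $P'$ by running $P$, sampling uniformly at random a subset $I\subseteq[k]$ of size at most $\eps k$, and flipping the output bits at the positions in $I$. The post-processing is purely local, so $P'$ still uses $c$ qubits of communication. With probability at least $p$ the true error set of $P$ has size at most $\eps k$, and conditioned on that event $I$ coincides with the error set with probability at least $1/\sum_{i=0}^{\eps k}\binom{k}{i}\ge 2^{-kH(\eps)}$. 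Hence $P'$ is correct on every instance with probability $\sigma\ge p\cdot 2^{-kH(\eps)}$.

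Now I would choose $\eps>0$ small enough that $H(\eps)<\gamma/2$, and set $\alpha=\alpha_0$. Then $P'$ meets the hypothesis of the zero-error SDPT, so $\sigma\le 2^{-\gamma k}$. Combining the two inequalities yields $p\le 2^{-(\gamma-H(\eps))k}\le 2^{-(\gamma/2)k}$, which is the claimed $2^{-\Omega(k)}$ bound.

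The only quantitative obstacle is making the constant $\gamma$ from~\cite{ksw:dpt} explicit enough to pick a corresponding $\eps$ with $H(\eps)<\gamma$; this is a routine matter of inspecting that paper. Everything else is lossless, so Theorem~\ref{thepsdptdisjtwoway} follows from the zero-error SDPT essentially as a black box, in direct parallel with how Theorem~\ref{thepsdptdisjoneway} was derived from Theorem~\ref{thdptdisjoneway}.
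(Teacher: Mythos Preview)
Your proposal is correct and matches the paper's approach exactly: the paper states Theorem~\ref{thepsdptdisjtwoway} simply by ``combining the (0-error) two-way quantum strong direct product theorem for Disjointness from~\cite{ksw:dpt} with the argument from the end of our Section~\ref{secdirectproddisj},'' which is precisely the guess-and-flip reduction you spell out. No further ideas are needed.
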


Assume without loss of generality that the error probability of our initial 3-party
protocol $P$ is at most half the $\eps$ of Theorem~\ref{thepsdptdisjtwoway}.
View the $n$-bit inputs of protocol $P$ as consisting of $t$ consecutive blocks of $n/t$ bits each.
We will restrict attention to inputs $z=z_1\ldots z_t$ where one $z_i$ is
all-1, and the other $z_j$ are all-0. Note that for such a $z$, we have $\DISJ_n(x,y,z)=\DISJ_{n/t}(x_i,y_i)$.
Fixing $z$ thus reduces the 3-party Disjointness on $(x,y,z)$
to 2-party Disjointness on a smaller instance $(x_i,y_i)$.
Since Charlie does not see input $z$, his $c_1$-bit message is independent of $z$.
Now by going over all $t$ possible $z$'s,
and running their 2-party protocol $t$ times starting from Charlie's message,
Alice and Bob obtain a protocol $P'$ that computes $t$ independent instances of 2-party Disjointness,
namely on each of the $t$ inputs $(x_1,y_1),\ldots,(x_t,y_t)$.
This $P'$ uses at most $t c_2$ qubits of communication.
For every $x$ and $y$, it follows from linearity of expectation that the expected
number of instances where $P'$ errs, is at most $\eps t/2$
(expectation taken over Charlie's message, and the $t$-fold Alice-Bob protocol).
Hence by Markov's inequality, the probability that $P'$ errs on more than $\eps t$ instances,
is at most 1/2.  Then for every $x,y$ there exists a $c_1$-bit message $m_{xy}$ such that $P'$,
when given that message to start with, with probability at least 1/2 correctly computes $1-\eps$ of all $t$ instances.

Now replace Charlie's $c_1$-bit message by a uniformly random message $m$. Alice and Bob can just generate
this by themselves using shared randomness. This gives a new 2-party protocol $P''$.
For each $x,y$, with probability $2^{-c_1}$ we have $m=m_{xy}$, hence with probability at least
$\frac{1}{2}2^{-c_1}$ the protocol $P''$ correctly
computes $1-\eps$ of all $t$ instances of Disjointness on $n/t$ bits each.
Choosing $t=O(c_1)$ and invoking Theorem~\ref{thepsdptdisjtwoway}
gives a lower bound on the communication in $P''$: $t c_2=\Omega(t\sqrt{n/t})$.
Hence $c_2=\Omega(\sqrt{n/c_1})$.
The overall communication of the original 3-party protocol $P$ is
$$
c_1+c_2=c_1+\Omega(\sqrt{n/c_1})=\Omega(n^{1/3})
$$
(the minimizing value is $t=n^{1/3}$).

This generalizes the bound of  Beame et~al.~\cite{bpsw:corruptionj} to the case where
we allow Alice and Bob to send each other qubits.
Note that this bound is tight for our restricted set of $z$'s, since
Alice and Bob know $z$ and can compute the 2-party Disjointness on the relevant $(x_i,y_i)$
in $O(\sqrt{n^{2/3}})=O(n^{1/3})$ qubits of two-way communication without help from Charlie,
using the optimal quantum protocol for 2-party Disjointness~\cite{aaronson&ambainis:search}.

\subsection{Communication-type $C\to B\to A$}

Now consider an even more restricted type of communication: Charlie sends a classical message to Bob,
then Bob sends a quantum message to Alice, and Alice computes the output.
We can use a similar argument as before, dividing the inputs into $t=O(n^{1/2})$ equal-sized blocks
instead of $O(n^{1/3})$ equal-sized blocks.
If we now replace the two-way SDPT (Theorem~\ref{thepsdptdisjtwoway}) by the new one-way SDPT
(Theorem~\ref{thepsdptdisjoneway}), we obtain a lower bound of $\Omega(\sqrt{n})$
for 3-party bounded-error protocols for Disjointness of this restricted type.

\paragraph{Remark.}
If Charlie's message is quantum as well, then the same approach works, except we need to
reduce the error of the protocol to $\ll 1/t$ at a multiplicative cost of $O(\log t)=O(\log n)$
to both $c_1$ and $c_2$ (Charlie's one quantum message needs to be reused $t$ times).
This worsens the two communication lower bounds to $\Omega(n^{1/3}/\log n)$ and $\Omega(\sqrt{n}/\log n)$ qubits, respectively.

\section{Massaging locally decodable codes to a special form}\label{appspecialldc}

In this appendix we justify the special decoding-format of LDCs claimed after Definition~\ref{defldc}.
First, it will be convenient to switch to the notion of a \emph{smooth code}, introduced by Katz and Trevisan~\cite{katz&trevisan:ldc}.

\begin{definition}
$C:\{\pm 1\}^n\rightarrow\{\pm 1\}^N$ is a $(q,c,\eps)$-\emph{smooth
code} if there is a randomized decoding algorithm $A$ such that
\begin{enumerate}
\item $A$ makes at most $q$ queries, non-adaptively.
\item For all $x\in\{\pm 1\}^n$ and $i\in[n]$ we have
$\Pr[A^{C(x)}(i)=x_i]\geq 1/2+\eps$.
\item For all $x\in\{\pm 1\}^n$, $i\in[n]$, and $j\in[N]$, the probability that on input
$i$ algorithm $A$ queries index $j$ is at most $c/N$.
\end{enumerate}
\end{definition}

Note that smooth codes only require good decoding
on codewords $C(x)$, not on $y$ that are close to $C(x)$.
Katz and Trevisan~\cite[Theorem~1]{katz&trevisan:ldc} established
the following connection:

\begin{theorem}[\cite{katz&trevisan:ldc}]\label{ldctosmooth}
A $(q,\delta,\eps)$-LDC is a $(q,q/\delta,\eps)$-smooth code.
\end{theorem}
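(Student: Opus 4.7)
The plan is to convert the LDC decoder $A$ into a smooth decoder $A'$ by identifying, for each input $i$, the query positions whose individual probability is too large, and avoiding them. Concretely, for each $i\in[n]$, I would define the \emph{heavy set}
$$
H_i=\set{j\in[N]:\Pr[A(i)\text{ queries }j]>q/(\delta N)},
$$
where the probability is over $A$'s internal randomness (note that the query distribution of a non-adaptive decoder depends only on $i$, not on the oracle). Since $A$ makes at most $q$ queries, $\sum_{j}\Pr[A(i)\text{ queries }j]\le q$, and this immediately yields $|H_i|\le\delta N$ by a simple counting argument.

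Next, I would define the smooth decoder $A'$ as follows: on input $i$ with oracle $y$, run the same query-selection as $A$, but whenever that procedure calls for reading a position $j\in H_i$, skip the actual oracle call and feed a fixed default bit (say $+1$) into $A$'s output computation. The resulting $A'$ remains non-adaptive and makes at most $q$ queries. For the smoothness bound, positions in $H_i$ are queried by $A'$ with probability $0$, while each position outside $H_i$ is queried with probability at most $q/(\delta N)=c/N$ where $c=q/\delta$, so the smoothness condition holds.

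For correctness, fix $x\in\{\pm 1\}^n$ and $i\in[n]$, and let $y^{(i,x)}\in\{\pm 1\}^N$ be the string agreeing with $C(x)$ on $[N]\setminus H_i$ and equal to $+1$ on $H_i$. By construction the output distribution of $A'^{C(x)}(i)$ coincides with that of $A^{y^{(i,x)}}(i)$. Since $d(C(x),y^{(i,x)})\le|H_i|\le\delta N$, the LDC decoding guarantee gives $\Pr[A^{y^{(i,x)}}(i)=x_i]\ge 1/2+\eps$, hence $\Pr[A'^{C(x)}(i)=x_i]\ge 1/2+\eps$, as required.

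The only conceptual step is the heavy-set bound $|H_i|\le\delta N$, which relies critically on measuring the threshold as $q/(\delta N)$; everything else is bookkeeping. There is no real obstacle here, just a care with the non-adaptivity (so that the query distribution is independent of the oracle contents, making the definition of $H_i$ unambiguous).
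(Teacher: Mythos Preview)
Your proposal is correct and follows essentially the same argument as the paper: define the heavy set $H_i$ of positions queried with probability exceeding $q/(\delta N)$, bound $|H_i|\le\delta N$ via the total-query-mass argument, and replace heavy queries by a fixed default value so that the modified decoder behaves on $C(x)$ exactly as the original decoder does on a string within Hamming distance $\delta N$ of $C(x)$. The only cosmetic differences are that you spell out the counting argument for $|H_i|\le\delta N$ and the role of non-adaptivity more explicitly, and you use the default value $+1$ (consistent with the $\{\pm1\}$ alphabet) where the paper writes~$0$.
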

\begin{proof}
Let $C$ be a $(q,\delta,\eps)$-LDC and $A$ be its $q$-query decoder.
For each $i\in[n]$, let $p_i(j)$ be the probability that on input~$i$,
algorithm $A$ queries index $j$. Let $H_i=\{j\mid p_i(j)>q/(\delta N)\}$.
Then $|H_i|\leq\delta N$, because $A$ makes no more than $q$ queries.
Let $B$ be the decoder that simulates $A$, except that on input $i$ it does
not make queries to $j\in H_i$, but instead acts as if those bits of its oracle are 0.
Then $B$ does not query any $j$ with probability greater than $q/(\delta N)$.
Also, $B$'s behavior on input $i$ and oracle $C(x)$ is the same as $A$'s behavior
on input $i$ and the oracle $y$ that is obtained by setting the $H_i$-indices of $C(x)$ to 0.
Since $y$ has distance at most $|H_i|\leq\delta N$ from $C(x)$,
we have $\Pr[B^{C(x)}(i)=x_i]=\Pr[A^y(i)=x_i]\geq 1/2+\eps$.
\end{proof}

A converse to Theorem~\ref{ldctosmooth} also holds: a
$(q,c,\eps)$-smooth code is a $(q,\delta,\eps-c\delta)$-LDC,
because the probability that the decoder queries one of
$\delta N$ corrupted positions is at most $(c/N)(\delta N)=c\delta$.
Hence LDCs and smooth codes are essentially equivalent, for
appropriate choices of the parameters.

\begin{theorem}[\cite{katz&trevisan:ldc}]\label{thmatching}
Suppose $C:\{\pm 1\}^n\rightarrow\{\pm 1\}^N$ is a $(q,c,\eps)$-smooth code.
Then for every $i\in[n]$, there exists a set $M_i$, consisting of at least $\eps N/(cq)$ disjoint sets of at most $q$ elements of $[N]$ each,
such that for every $Q\in M_i$ there exists a function $f_Q:\{\pm 1\}^{|Q|}\rightarrow\{\pm 1\}$ with the property
$$
\Exp_x[f_Q(C(x)_Q)x_i]\geq\eps.
$$
Here $C(x)_Q$ is the restriction of $C(x)$ to the bits in $Q$, and the expectation is uniform over all $x\in\{\pm 1\}^n$.
\end{theorem}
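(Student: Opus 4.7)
My plan is to follow the standard Katz--Trevisan extraction, fixing an index $i \in [n]$ and working with the smooth decoder $A$ restricted to input $i$. Since $A$ is non-adaptive, on input $i$ it samples a query set $Q \subseteq [N]$ with $|Q|\le q$ from some distribution $D_i$, and applies a predicate $f_Q : \{\pm 1\}^{|Q|} \to \{\pm 1\}$ to $C(x)_Q$. Translating the success probability $\ge 1/2+\eps$ into $\pm 1$ bias gives
$$
\Exp_{(Q,f_Q)\sim D_i}\Exp_x[f_Q(C(x)_Q)\,x_i] \;\ge\; 2\eps.
$$

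The first step is a simple averaging argument. Call a pair $(Q,f_Q)$ in the support of $D_i$ \emph{good} if $\Exp_x[f_Q(C(x)_Q)\,x_i]\ge \eps$; otherwise call it \emph{bad}. Bad pairs contribute at most $\eps\cdot 1 = \eps$ to the total bias, so the $D_i$-probability of good pairs is at least $\eps$ (using that each individual bias is trivially bounded by $1$). This gives a ``budget'' of $\eps$ worth of query sets, each of which individually certifies a non-trivial correlation with $x_i$.

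The second step is a greedy construction of the matching $M_i$ that exploits smoothness. I pick an arbitrary good $Q_1$ from the support of $D_i$, add it to $M_i$, and then discard every pair $(Q',f_{Q'})$ with $Q'\cap Q_1 \ne \emptyset$. Smoothness says that for each fixed $j\in[N]$, $\Pr_{D_i}[j\in Q]\le c/N$, so by a union bound over the at most $q$ elements of $Q_1$ the total $D_i$-mass removed in this step is at most $qc/N$. Iterating, after $t$ steps we have deleted mass at most $t\cdot qc/N$, so as long as $t\cdot qc/N<\eps$ some good pair remains available and the process continues. This yields $|M_i|\ge \eps N/(qc)$ pairwise disjoint good sets, each with an accompanying predicate $f_Q$ witnessing $\Exp_x[f_Q(C(x)_Q)\,x_i]\ge \eps$, as required.

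There is no real obstacle here; the only point to be a little careful about is the averaging in the first step, where one must use that the bias of any single pair is bounded by $1$ in order to argue that the good pairs carry $D_i$-probability at least $\eps$ (not $\eps/(1-\eps)$ or some other slightly different constant). The rest is a textbook greedy disjoint-sets argument, whose deletion accounting relies solely on the smoothness property and the trivial bound $|Q|\le q$.
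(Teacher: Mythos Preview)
Your proof is correct and follows essentially the same approach as the paper. The only cosmetic difference is in the second step: the paper phrases the extraction as taking a \emph{maximal} matching $M_i$ among the good sets, observing that $T=\bigcup_{Q\in M_i}Q$ is a vertex cover of size at most $q|M_i|$, and then bounding $\eps < p(E_i)\le \sum_{j\in T}\Pr[j\in Q]\le c|T|/N\le cq|M_i|/N$; your iterative greedy deletion is the constructive version of the same argument and yields the identical bound $|M_i|\ge \eps N/(cq)$.
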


\begin{proof}
Fix some $i\in[n]$.
Without loss of generality we assume that to decode $x_i$, the decoder picks some set $Q\subseteq[N]$ (of at most $q$ indices) with probability $p(Q)$,
queries those bits, and then outputs a \emph{random variable} (not yet a function) $f_Q(C(x)_Q)\in\{\pm 1\}$ that depends on the query-answers.
Call such a $Q$ ``good'' if
$$
\Pr_x[f_Q(C(x)_Q)=x_i]\geq 1/2+\eps/2.
$$
Equivalently, $Q$ is good if
$$
\Exp_x[f_Q(C(x)_Q)x_i]\geq\eps.
$$
Now consider the hypergraph $H_i=(V,E_i)$ with vertex-set $V=[N]$ and edge-set $E_i$ consisting of all good sets~$Q$.
The probability that the decoder queries some $Q\in E_i$ is $p(E_i):=\sum_{Q\in E_i} p(Q)$.
If it queries some $Q\in E_i$ then $\Exp_x[f_Q(C(x)_Q)x_i]\leq 1$, and if it queries some $Q\not\in E_i$ then
$\Exp_x[f_Q(C(x)_Q)x_i]<\eps$.
Since the overall probability of outputting $x_i$ is at least $1/2+\eps$ for every $x$, we have
$$
2\eps \leq \Exp_{x,Q}[f_Q(C(x)_Q)x_i] < p(E_i)\cdot 1 + (1-p(E_i))\eps=\eps+p(E_i)(1-\eps),
$$
hence
$$
p(E_i)>\eps/(1-\eps)\geq \eps.
$$
Since $C$ is smooth, for every $j\in[N]$ we have
$$
\sum_{Q\in E_i:j\in Q} p(Q)\leq \sum_{Q:j\in Q} p(Q)=\Pr[A\mbox{ queries }j]\leq \frac{c}{N}.
$$
A \emph{matching} of $H_i$ is a set of disjoint $Q\in E_i$. Let $M_i$ be a matching in $H_i$ of maximal size.
Our goal is to show $|M_i|\geq \eps N/(cq)$. Define $T=\cup_{Q\in M_i} Q$.
This set $T$ has at most $q|M_i|$ elements, and intersects each $Q\in E_i$ (otherwise $M_i$ would not be maximal).
We now lower bound the size of $M_i$ as follows:
$$
\eps< p(E_i)=\sum_{Q:Q\in E_i} p(Q)\stackrel{(*)}{\leq}\sum_{j\in T}\sum_{Q\in E_i:j\in Q}p(Q)\leq \frac{c|T|}{N}\leq \frac{cq|M_i|}{N},
$$
where $(*)$ holds because each $Q\in E_i$ is counted exactly once on the left and at least once on the right (since $T$ intersects each $Q\in E_i$).
Hence $|M_i|\geq \eps N/(cq)$. It remains to turn the random variables $f_Q(C(x)_Q)$ into fixed values in $\{\pm 1\}$;
it is easy to see that this can always be done without reducing the correlation $\Exp_x[f_Q(C(x)_Q)x_i]$.
\end{proof}

The previous theorem establishes that the decoder can just pick a uniformly random element $Q\in M_i$,
and then continue as the original decoder would on those queries, at the expense of reducing the average
success probability by a factor 2.  In principle, the decoder could output any function of the $|Q|$ queried bits that it wants.
We now show (along the lines of~\cite[Lemma~2]{kerenidis&wolf:qldcj}) that we can restrict attention to parities
(or their negations), at the expense of decreasing the average success probability by another factor of $2^q$.

\begin{theorem}\label{thmatchingandparity}
Suppose $C:\{\pm 1\}^n\rightarrow\{\pm 1\}^N$ is a $(q,c,\eps)$-smooth code.
Then for every $i\in[n]$, there exists a set $M_i$, consisting of at least $\eps N/(cq)$ disjoint sets of at most $q$ elements of $[N]$ each,
such that for every $Q\in M_i$ there exists an $a_{i,Q}\in\{\pm 1\}$ with the property that
$$
\Exp_x[a_{i,Q} x_i\prod_{j\in Q}C(x)_j]\geq\frac{\eps}{2^q}.
$$
\end{theorem}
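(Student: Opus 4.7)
The plan is to start from the conclusion of Theorem~\ref{thmatching}, which already provides, for each $i \in [n]$, a matching $M_i$ of size at least $\eps N/(cq)$ together with $\pm 1$-valued functions $f_Q : \{\pm 1\}^{|Q|} \to \{\pm 1\}$ satisfying $\Exp_x[f_Q(C(x)_Q)\, x_i] \geq \eps$. What remains is to show that each such $f_Q$ can be replaced by (plus or minus) a single parity of the queried bits, at the cost of a factor $2^{-q}$ in the correlation with $x_i$.

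First, for each $Q \in M_i$, I would expand $f_Q$ in its Fourier basis over $\{\pm 1\}^Q$:
$$
f_Q(y) = \sum_{S \subseteq Q} \widehat{f_Q}(S)\, \prod_{j \in S} y_j.
$$
Since $f_Q$ is $\pm 1$-valued, we have $|\widehat{f_Q}(S)| \leq 1$ for every $S$. Substituting $y = C(x)_Q$ and taking expectation over uniform $x$ gives
$$
\eps \;\leq\; \Exp_x\!\left[f_Q(C(x)_Q)\, x_i\right] \;=\; \sum_{S \subseteq Q} \widehat{f_Q}(S)\, \Exp_x\!\left[x_i \prod_{j \in S} C(x)_j\right].
$$

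Next, since there are at most $2^{|Q|} \leq 2^q$ terms on the right-hand side, by pigeonhole some $S^* \subseteq Q$ satisfies
$$
\left|\widehat{f_Q}(S^*)\, \Exp_x\!\left[x_i \prod_{j \in S^*} C(x)_j\right]\right| \;\geq\; \frac{\eps}{2^q}.
$$
Using $|\widehat{f_Q}(S^*)| \leq 1$, we get $\left|\Exp_x[x_i \prod_{j \in S^*} C(x)_j]\right| \geq \eps/2^q$. Note that $S^* = \emptyset$ is impossible, since it would give $\Exp_x[x_i] = 0$; so $S^* \neq \emptyset$, and $S^* \subseteq Q$ has at most $q$ elements. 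I would then replace $Q$ by $S^*$ in the matching (disjointness is preserved since $S^* \subseteq Q$) and choose $a_{i,Q} \in \{\pm 1\}$ to match the sign of the expectation, yielding the desired inequality.

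There is no real obstacle here --- the argument is essentially a Fourier expansion plus pigeonhole. The only small bookkeeping point is to verify that shrinking each matched set $Q$ to a subset $S^*$ preserves the matching structure (which it does trivially), and that the contribution from $S = \emptyset$ vanishes so that $S^*$ is guaranteed non-empty. This is exactly analogous to the reduction used in~\cite{kerenidis&wolf:qldcj} to pass from an arbitrary $q$-query decoder to one whose output is a signed parity of the queried bits.
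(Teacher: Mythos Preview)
Your proposal is correct and follows essentially the same approach as the paper: invoke Theorem~\ref{thmatching}, Fourier-expand each $f_Q$, and use pigeonhole over the at most $2^{|Q|}\le 2^q$ terms to extract a single parity $S^*\subseteq Q$ with correlation at least $\eps/2^q$, then replace $Q$ by $S^*$. Your explicit remark that $S^*\neq\emptyset$ (since $\Exp_x[x_i]=0$) and that disjointness is preserved under shrinking are nice bits of bookkeeping that the paper leaves implicit.
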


\begin{proof}
Fix $i\in[n]$ and take the set $M_i$ produced by Theorem~\ref{thmatching}.
For every $Q\in M_i$ we have
$$
\Exp_x[f_Q(C(x)_Q)x_i]\geq\eps.
$$
We would like to turn the functions $f_Q:\{\pm 1\}^{|Q|}\rightarrow\{\pm 1\}$ into parity functions.
Consider the Fourier transform of $f_Q$:
for $S\subseteq[|Q|]$ and $z\in\{\pm 1\}^{|Q|}$, define parity function $\chi_S(z)=\prod_{j\in S} z_j$ and
Fourier coefficient $\widehat{f_Q}(S)=\frac{1}{2^{|Q|}}\sum_zf_Q(z)\chi_S(z)$.
Then we can write
$$
f_Q=\sum_S\widehat{f_Q}(S)\chi_S.
$$
Using that $\widehat{f_Q}(S)\in[-1,1]$ for all $S$, we have
$$
\eps \leq \Exp_x[f_Q(C(x)_Q)x_i]
 = \sum_S \widehat{f_Q}(S)\Exp_x[x_i\chi_S(C(x)_Q)]
\leq  \sum_S \left| \Exp_x[x_i\chi_S(C(x)_Q)] \right|.
$$
Since the right-hand side is the sum of $2^{|Q|}$ terms, there exists an $S$ with
$\displaystyle |\Exp_x[x_i\chi_S(C(x)_Q)]|\geq \frac{\eps}{2^{|Q|}}$.

Defining $a_{i,Q}=\mbox{sign}(\Exp_x[x_i\chi_S(C(x)_Q)])\in\{\pm 1\}$, we have
$$
\Exp_x[a_{i,Q} x_i\prod_{j\in S}C(x)_j]= |\Exp_x[x_i\chi_S(C(x)_Q)]|\geq \frac{\eps}{2^{|Q|}}\geq \frac{\eps}{2^q}.
$$
The theorem follows by replacing each $Q$ in $M_i$ by the set $S$ just obtained from it.
\end{proof}

Combining Theorems~\ref{ldctosmooth} and~\ref{thmatchingandparity} gives the
decoding-format claimed after Definition~\ref{defldc}.

\end{document}